\newtheorem{thm}{Theorem}
\newtheorem{cor}{Corollary}
\newtheorem{pro}{Proposition}
\newtheorem{lemma}{Lemma}
\newtheorem{definition}{Definition}
\newcommand{\s}{{\bf s}}
\newcommand{\sm}[1]{\ensuremath{{\bf s}_{-#1}}}
\renewcommand{\c}{\ensuremath{c_{\textrm{e}}\xspace}}
\newcommand{\CC}{\ensuremath{CC\xspace}}
\newcommand*{\field}[1]{\mathbb{#1}}%
\newcommand{\N}{\field{N}} 
\renewcommand{\Pr}{\mathrm{Pr}}
\newcommand{\E}{\mathbb{E}}
\title{Network Formation under Random Attack and Probabilistic Spread}
\author{
Yu Chen$^1$
\and
Shahin Jabbari$^1$\and
Michael Kearns$^1$\and
Sanjeev Khanna$^1$\And
Jamie Morgenstern$^2$
\affiliations
$^1$University of Pennsylvania\\
$^2$Georgia Tech\\
\emails
\{chenyu2, jabbari, mkearns, sanjeev\}@cis.upenn.edu,
jamiemmt.cs@gatech.edu
}
\begin{document}
\maketitle

\begin{abstract}
We study a network formation game where agents receive 
benefits by forming connections to other agents but also incur
both direct and indirect costs from the formed connections. 
Specifically, once the agents have 
purchased their connections, an attack starts at a randomly 
chosen vertex in the 
network and spreads according to the independent cascade 
model with a fixed probability, 
destroying any infected agents. The utility or welfare of an agent 
in our game is defined to be 
the expected size of the agent's connected component 
post-attack minus her expenditure in 
forming connections. 

Our goal is to understand the properties of the equilibrium 
networks formed in this game. 
Our first result concerns the edge density of equilibrium networks. 
A network connection increases 
both the likelihood of remaining connected to other agents after 
an attack as well the likelihood of 
getting infected by a cascading spread of infection. We show that 
the latter concern primarily prevails and 
any equilibrium network in our game contains only $O(n\log n)$ 
edges where $n$ denotes the number of 
agents. On the other hand, there are equilibrium networks that 
contain $\Omega(n)$ edges showing that our 
edge density bound is tight up to a logarithmic factor. 
Our second result shows that the presence of attack and its 
spread through a cascade does not significantly 
lower social welfare as long as the network is not too dense. 
We show that any non-trivial equilibrium 
network with $O(n)$ edges has $\Theta(n^2)$ social welfare, 
asymptotically similar to the social welfare guarantee
in the game without any attacks. 
\end{abstract}
\section{Introduction}
\label{sec:intro}

We study a network formation game where strategic 
agents (vertices on a graph) 
receive both benefits and costs from forming connections 
to other agents. While 
various benefit functions exist in the 
literature~\cite{BalaG00, FabrikantLMPS03}, 
we focus on the \emph{reachability network benefit}. 
Here, the benefit of an agent is 
the size of her connected component in the collectively 
formed graph. This models 
settings where reachability (rather than centrality) motivates 
joining the network, e.g. 
when transmitting packets over technological networks 
such as the Internet.

Most previous works feature a direct edge cost $\c > 0$ 
for forming a 
link.~\citet{GoyalJKKM16} depart from this notion by 
studying a game 
where forming links introduces an additional 
\emph{indirect} cost by exposing agents 
to contagious network shocks. These indirect 
costs can model scenarios such 
as virus spread through technological or biological 
networks.

Our work continues this investigation of direct and 
indirect connection costs. 
To model the indirect cost we assume that, after 
network formation, an adversary attacks a 
single vertex uniformly at random. The attack then 
kills the vertex and spreads through the 
network via the independent cascade model 
according to parameter $p$~\cite{KempeKT03}. 
This random attack and probabilistic spread captures 
the epidemiological quality of virus spread 
in both biological and technological networks.

At a high level, our work is most closely related 
to two previous works.~\citet{BalaG00} study a 
reachability network game without attacks and show a 
sharp characterization of equilibrium networks: 
every tree and the empty network can form in 
equilibria.~\citet{GoyalJKKM16} study a reachability 
network formation game where an adversary 
inspects the formed network and then deliberately 
attacks a single vertex in the network. The attack 
then spreads deterministically to neighboring vertices 
according to a known rule, while agents may immunize 
against the attack for a fixed cost. Our 
game is most similar to the latter setting under a 
random adversary and high immunization cost. 
However, in our setting attacks spread probabilistically 
(through independent cascades) rather 
than deterministically. This yields an arguably more 
realistic model of infection spread but incurs 
additional complexity: computing the expected 
connectivity benefit of an agent in a given network 
is now \#P-complete~\cite{WangCW12}.

\citet{GoyalJKKM16} show that while more diverse 
equilibrium networks, including ones with multiple cycles, 
can emerge in addition to trees and the empty 
graph, the equilibrium networks with $n$ agents will have 
at most $2n-4$ edges; 
less than twice the number of edges that can form in the 
equilibria of the attack-free game. 
Furthermore, they show that 
the social welfare is at least $n^2-o(n^{5/3})$ in non-trivial 
equilibrium networks.~Asymptotically, this is  the maximum welfare possible which is 
achieved in any nonempty equilibrium of the attack-free game. 
 In the regime where the cost of immunization is high, 
the game of \citet{GoyalJKKM16} only admits disconnected 
and fragmented equilibrium networks due to deterministic 
spread of the attack, and the social welfare of the
resulting networks may be as low as 
$\Theta(n)$.
$\newline$

\noindent\textbf{Our Results and Techniques}
In our game, computing utilities or even verifying 
network equilibrium is computationally hard. 
We circumvent this difficulty by proving structural 
properties for equilibrium networks. First, 
we provide an upper bound on the edge density in equilibria.
\begin{thm}[Statement of Theorem~\ref{thm:density1}]
\label{thm:1}
Any equilibrium network on $n$ vertices has 
$O\left(n\log n/p\right)$ edges.
\end{thm}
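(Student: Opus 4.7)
The plan is to combine an equilibrium deviation argument with a global upper bound on the total expected post-attack connectivity of the network. As a first step, I would invoke the standard observation that if agent $v$ purchased $k_v$ edges in equilibrium, dropping all of them simultaneously cannot be profitable, which gives
\[
\c \cdot k_v \;\leq\; \E[|C(v)|] - \E[|C'(v)|] \;\leq\; \E[|C(v)|],
\]
where $C(v)$ is $v$'s post-attack connected component (empty if $v$ is infected) and $C'(v)$ is the analogous component when $v$'s purchases are deleted from the graph. Summing this inequality over all agents yields
\[
\c \cdot m \;\leq\; \sum_v \E[|C(v)|] \;=\; \E\!\Bigl[\sum_C |C|^2\Bigr],
\]
where the right-hand sum ranges over post-attack components.

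The remaining task is to upper bound $\E[\sum_C |C|^2]$ by $O(n \log n / p)$ on any equilibrium network. Here the idea is to exploit that a dense network suffers a very destructive independent-cascade attack. Writing $\sum_C |C|^2 \leq n \cdot \max_C |C|$, it is enough to show that $\E[\max_C |C|] = O(\log n / p)$ in equilibrium. To obtain this, I would, for each vertex $v$, lower bound the probability that $v$ is infected by tracking both the direct attacks on $v$'s neighborhood and cascades along short paths through the equilibrium graph. A Chernoff-style concentration combined with a union bound over vertex sets should then let us cap the expected size of the maximum surviving component at the claimed $O(\log n / p)$ and, via the inequality above, cap $m$ correspondingly.

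The main technical obstacle is the infection-spread argument, because equilibrium graphs have no a priori structural properties such as bounded degree or expansion. A naive use of the deviation bound using only direct infection from an attacked neighbor gives $\Pr[v \text{ infected}] \geq d_v p / n$ and yields only the much weaker bound $m = O(n^2/p)$; the $O(n \log n/p)$ improvement really needs to capture multi-hop cascades through high-degree clusters. I expect the heaviest lifting to be a percolation-style coupling that shows: once a vertex accumulates $\omega(\log n / p)$ incident bought edges, the total probability that the infection reaches $v$ through some path crosses $1 - 1/\mathrm{poly}(n)$, so that most of those purchased edges produce near-zero marginal benefit in expectation, contradicting the equilibrium condition. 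Making this amortization work across all vertices so that the $\log n$ factor absorbs the union bound cleanly is where the proof's main difficulty lies.
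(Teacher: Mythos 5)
Your approach has a concrete gap: the intermediate claim that $\E\bigl[\max_C |C|\bigr]=O(\log n/p)$ in equilibrium is false, and the paper itself demonstrates why. Theorem~\ref{thm:welfare} shows that any non-trivial sparse equilibrium has social welfare $\Theta(n^2)$, i.e.\ $\E\bigl[\sum_C |C|^2\bigr]=\Theta(n^2)$ (this quantity is exactly the total expected connectivity benefit appearing in your chain); combined with $\sum_C|C|^2\le n\max_C|C|$ this forces $\E[\max_C|C|]=\Omega(n)$. A cycle, for example, is an equilibrium in which a typical attack kills only $O(1/p)$ vertices and the remaining $n-O(1/p)$ vertices still form a single component. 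So the aggregated inequality $\c m\le\E\bigl[\sum_C|C|^2\bigr]$ you derive from the drop-all-edges deviation only yields the vacuous $m=O(n^2)$ and cannot be sharpened to $O(n\log n/p)$: the total connectivity benefit really is quadratic, and summing the crude per-vertex bound $\c k_v\le\E[|C(v)|]$ throws away the term $\E[|C'(v)|]$, which in a sparse graph is nearly as large as $\E[|C(v)|]$.

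The paper's proof is fundamentally local rather than aggregate. It does not bound total edges by total benefit; instead it shows that a graph with more than $\Omega(n\log n/p)$ edges must contain an \emph{induced subgraph} $H$ with minimum cut $\Omega(\log n/p)$ (Lemma~\ref{lem:cut}, by a repeated-small-cut decomposition), and that such an $H$ is almost surely entirely connected in $H[p]$ and thus entirely wiped out whenever any of its vertices is attacked (Lemma~\ref{lem:noga} and Definition~\ref{def:certain}). The beneficial deviation is then a single edge, not all of a vertex's edges: since $H$ must contain a cycle, a vertex $u$ that bought a cycle edge can drop it without changing her outcome (conditioned on the attack reaching $H$, she dies whp anyway; conditioned on it not, the cycle keeps her connected), saving $\c$ (Lemma~\ref{lem:certain}). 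The crucial insight your plan misses is that ``almost certain infection'' holds only within a dense, high-min-cut \emph{piece} of the graph, not for the graph as a whole; high degree of a single vertex is neither necessary nor sufficient, and a graph with $\Theta(n\log n/p)$ edges can have uniformly bounded degree $\Theta(\log n/p)$, so the high-degree-vertex amortization you sketch does not locate the right witness structure.
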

For constant $p$ this upper bound is tight up to a 
logarithmic factor. The possibility of over-building 
therefore differentiates our game from those 
of~\citet{BalaG00} and~\citet{GoyalJKKM16}, but the extent 
of over-building is limited.

To prove Theorem~\ref{thm:1}, we first show 
that any equilibrium network 
with more than $\Omega(n\log(n/p))$ edges 
contains an induced subgraph 
with large minimum cut size. We then show that 
if a network has large minimum cut 
size, in \emph{every} attack (with high probability), 
either almost all vertices in the network 
will die or almost all vertices in the network will survive. 
As a result, any vertex in the induced 
subgraph can beneficially deviate by dropping an edge. 
Together, these observations allows us to prove the claimed
edge density bound.

Next, we show that any equilibrium network that is 
nontrivial (i.e. contains at least one edge) 
also contains a large connected component. 
Moreover, as long as the
network is not too dense, it achieves a constant 
approximation to the best welfare 
possible of the attack-free game.
\begin{thm}[Informal Statement of 
Theorems~\ref{lem:largest}~and~\ref{thm:welfare}]
\label{thm:welf}
Any non-trivial equilibrium network over $n$ 
vertices contains a connected component
of size at least $n/3$. Furthermore, if the 
number of edges in the network is 
$O(n/p)$, then the social welfare is $\Omega(n^2)$.
\end{thm}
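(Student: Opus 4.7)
The plan is to handle the two assertions of the theorem in sequence.

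For the large-component claim, my approach is proof by contradiction: assume every connected component of the equilibrium network has size strictly less than $n/3$. Since the network is non-trivial, there is some edge $(u,v)$ purchased by an agent $u$, and her current utility is bounded above by $|C_u| - \c < n/3$, where $C_u$ is her component. The core of the argument is to exhibit a profitable unilateral deviation for $u$. I plan to use two deviations in tandem. First, dropping $(u,v)$ shows via equilibrium that the marginal expected post-attack component size contributed by that edge is at least $\c$, which in particular forces $\c$ to be strictly smaller than $n/3$. Second, a swap deviation that reroutes an edge of $u$ to a vertex $w$ in a different component would be used to derive the contradiction: the total expenditure is unchanged, so it suffices to show that $u$'s expected post-attack component size strictly increases under the swap. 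I would argue this by coupling the attack randomness across the two networks, exploiting the fact that more than $2n/3$ vertices (and by pigeonhole some component of reasonable size) lie outside $C_u$.

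For the welfare bound I can use the large-component claim directly. Let $C^{*}$ denote a component with $|C^{*}| \ge n/3$. The welfare equals the expected sum of squared surviving-component sizes minus total edge expenditure, so it suffices to lower bound the former by $\Omega(n^2)$; the expenditure contribution is at most $\c \cdot O(n/p) = O(n^2)$ and can be absorbed into the constant. I would split into two cases on $|C^{*}|$. If $|C^{*}| \le 2n/3$, then with probability at least $1/3$ the attack starts outside $C^{*}$, so $C^{*}$ survives intact and contributes at least $(n/3)^2$ in expectation. If $|C^{*}| > 2n/3$, then although the attack typically does strike $C^{*}$, the $O(n/p)$ edge hypothesis implies that the expected number of live edges in the independent cascade is only $p \cdot O(n/p) = O(n)$; a subcritical percolation argument, with the hidden constant in $O(n/p)$ taken sufficiently small, then shows that the cascade destroys only a small fraction of $C^{*}$, leaving a residual component of size $\Omega(n)$ with constant probability, again yielding contribution $\Omega(n^2)$.

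The hardest step I foresee is the swap-deviation analysis in the first half. The independent-cascade process depends nonlocally on the entire edge set, so arguing monotonically that rerouting a single edge of $u$ strictly increases $u$'s expected post-attack component size requires care: one must track how infection paths that would have traversed $(u,v)$ are redirected through the new edge and rule out the possibility that the replacement simply introduces a worse cascade. The percolation estimate used in the welfare bound is comparatively standard, provided the hidden constant in $O(n/p)$ lies below the threshold.
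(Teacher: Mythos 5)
Your high-level decomposition matches the paper's: establish a large component, then split the welfare argument into a case where the component is bounded away from $n$ (attack misses with constant probability) and a case where it is nearly the whole graph (percolation argument). But the proof of the large-component claim as you've sketched it has a genuine gap, and it is exactly the step you flag as hardest. You propose a swap deviation (drop $(u,v)$, add $(u,w)$), and correctly observe that because costs are unchanged you just need the benefit to strictly increase. The trouble is that dropping $(u,v)$ may disconnect $u$ from a large chunk of $C_u$ (if $v$ is a cut vertex on $u$'s side), and the gain from wiring into a component $C_w$ of size $< n/3$ need not compensate; moreover the independent cascade makes the comparison non-monotone, and a generic coupling does not settle the sign of the change. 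You would at minimum need to choose which edge of $u$ to drop carefully, and even then it is not clear the argument closes.

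The paper sidesteps all of this by using a pure \emph{additive} deviation. It takes a vertex $v$ in the smallest component $C_0$ and has it add an edge to the second-smallest component $C_1$. The crucial structural observation, which your sketch does not use, is that if the attack originates in $C_0$, it can only reach $C_1$ through $v$ herself, so $v$'s post-attack benefit cannot decrease in that event (if $v$ is killed her benefit is already $0$; if she survives the new edge can only help). This makes the case analysis trivial: conditioned on the attack missing both $C_0$ and $C_1$ (probability $(n-n_0-n_1)/n$), $v$ gains $n_1$; conditioned on the attack hitting $C_1$ (probability $n_1/n$), $v$ loses at most $n_0$. Using that at least $4$ components exist (forced by every component having size $<n/3$), one gets $n_0 \le n/4$, and combined with $n_1 \ge \c+1$ (any nonempty component must have size at least $\c+1$ in equilibrium), the net gain exceeds $\c$, contradicting equilibrium. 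The paper also handles isolated vertices as a separate case, which your sketch omits entirely. Your approach for the welfare bound is closer in spirit, but where you invoke a ``subcritical percolation argument, with the hidden constant in $O(n/p)$ taken sufficiently small,'' the paper instead bounds the expected number of \emph{isolated} vertices in $C^{\star}[p]$ via the AM--GM inequality: with $|E| < n^{\star}\log(1/\epsilon)/p$ edges, the expected number of isolated vertices in $C^{\star}[p]$ is at least $\epsilon n^{\star}$, hence the largest surviving component has expected size at most $(1-\epsilon)n^{\star}$, which plugs into a separate lemma relating this to the post-attack sum of squared component sizes. This avoids any subcriticality assumption on the constant and gives an explicit dependence on $\epsilon$ that your argument, as stated, would not deliver.
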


To prove Theorem~\ref{thm:welf}, we first show 
that any agent in a small connected 
component can increase her connectivity benefits
by purchasing an edge to a larger 
component without significantly increasing her 
attack risk. This implies the existence 
of a large connected component. We then 
use the large component to argue 
that when the equilibrium network is sparse, 
the surviving network post-attack still contains 
a large connected component. This guarantees 
large social welfare.

While~\citet{GoyalJKKM16} show robustness 
of the structural properties of 
the original reachability game 
of~\citet{BalaG00} to a variation with attack,
deterministic spread and the option of 
immunization for players, we show 
robustness in another variant that involves a 
cascading attack but disallows
immunization. However, on the technical front, 
the tools that we use
to prove these robustness results are very 
different from the analysis 
of both of these previous games.

\noindent\textbf{Organization} 
We introduce our model 
and discuss the related work in Section~\ref{sec:model}. 
In Section~\ref{sec:examples} we present examples of
equilibrium  networks of our game. 
Sections~\ref{sec:density}~and~\ref{sec:welfare} are devoted
to the characterization of the edge density and social welfare. 	
We conclude with directions for future work in Section~\ref{sec:future}.
\section{Model}
\label{sec:model}
We start by formalizing our model and borrow most of our 
notation and terminology from~\citet{GoyalJKKM16}. We assume the $n$ vertices of a graph 
(network) correspond to individual players. Each player has the choice 
to purchase edges to other players at a \emph{fixed} cost of $\c>0$ per edge. 
Throughout we assume that $\c$ is a constant independent of $n$.
Furthermore, we use the term \emph{high probability} to refer to probability 
at least $1-o(1/n)$ henceforth.

A (pure) \emph{strategy} $s_i\subseteq [n]$ for player $i$ 
consists of a subset of players to whom player $i$ purchased an edge.
We assume that edge purchases are unilateral i.e. players do not need
approval to purchase an edge to another player
but that the connectivity benefits and risks are bilateral.\footnote{As 
an example of a scenario where the consequences are bilateral 
even though the link formation is unilateral,
consider the spread of a disease in a social network where the links 
are formed as a result of physical proximity of individuals.  
The social benefits and potential risks of a contagious disease are 
bilateral in this case although the link formation as a result of proximity 
is unilateral. We leave the study of the bilateral edge formation 
for future work.}
  
Let $\s = (s_1,\ldots,s_n)$ denote the strategy profile for all the
players.  Fixing $\s$, the set of edges purchased by all the
players induces an undirected graph. We denote a \emph{game}
\emph{graph} as a graph $G$, where $G=(V,E)$ is the undirected
graph induced by the edge purchases of all players.

Fixing a game graph $G$, the adversary selects a \emph{single}
vertex $v\in V$ uniformly at random to start the attack.
The attack kills $v$ and then spreads according to the 
independent cascade model 
with probability $p\in(0,1)$~\cite{KempeKT03}.\footnote{Throughout we assume 
that $p$ is a constant independent of the number of players $n$. We discuss
the regime in which $p$ decreases as the number of players increases
in Section~\ref{sec:edge-small-p}.} 
In the independent cascade model, in the first round, the attack spreads independently 
killing each of the neighbors of the initially attacked vertex $v$ with probability $p$. 
In the next round, the spread continues from all the neighbors of $v$ that were killed in the 
previous round. 
The spread stops when no new vertex was killed in the last round or when all the vertices are killed.

The adversary's attack can be alternatively described as follows. Fixing a game graph $G$, 
let $G[p]$ denote the \emph{random} graph 
obtained by retaining each edge of $G$ independently with probability $p$. 
The adversary picks a vertex $v$ uniformly at random to start the attack. 
The attack kills $v$ and all the vertices in the connected component of 
$G[p]$ that contains $v$.

Let $\CC_i(v)$ denote the \emph{expected} size of the connected component of player $i$ 
post-attack to a vertex $v$ and we define $\CC_v(v)$ to be $0$. Then the expected utility (utility for short) of
player $i$ in strategy profile $\s$ denoted by $u_i(\s)$ is precisely
\[
u_i(\s) =\frac{1}{|V|}\sum_{v\in V}\CC_i\left(v\right)-|s_i| \c.
\]
We refer to the sum of utilities of all the
players playing a strategy profile $\s$ as the \emph{social welfare} of $\s$.

\citet{WangCW12} show that computing the exact spread of the attack in the independent cascade model
is \#P-complete in general. This implies that, given a strategy profile $\s$, computing the expected 
size of the connected component of all vertices (and hence the expected utility of all vertices) 
is \#P-complete. However, an approximation of these quantities can be obtained 
by Monte Carlo simulation.

We model each of the $n$ players as
strategic agents who deterministically choose which edges to purchase.
A strategy profile $\s$ is a
\emph{pure strategy Nash equilibrium} if,
for any player $i$, fixing the behavior of the other players to be
$\sm{i}$, the expected utility for $i$, $u_i(\s)$, cannot strictly increase when playing any strategy $s'_i$
over $s_i$. We focus our attention to pure strategy Nash equilibrium (or equilibrium) in this work.
Since computing the expected utilities in our game is \#P-complete, even verifying
that a strategy profile is an equilibrium is \#P-complete. Hence as our main contribution, 
we prove structural properties for the equilibrium networks regardless of this computational barrier.
\subsection{Related Work}
\label{sec:related}
There are two lines of work closely related to ours. First,~\citet{BalaG00}
study the attack-free version of our game. They show that equilibrium networks are either 
trees or the empty network. Also since there is no attack, the social welfare in nonempty equilibrium 
networks is asymptotically $n^2-o(n^2)$.

Second,~\citet{GoyalJKKM16} study a network formation game where players in addition to having the option of
purchasing edges can also purchase immunization from the attack. Since we do not study the effect of immunization 
purchases in our game, our game corresponds to the regime of parameters in their game
where the cost of immunization is so high that no vertex would purchase immunization in equilibria.
Moreover, they study several different adversarial attack models and our attack model coincides
with their \emph{random attack adversary}.
The main difference between our work and theirs is that they assume the attack spreads
deterministically while we assume the attack spreads according to the independent cascade 
model~\cite{KempeKT03}. In many real world scenarios e.g. the spread of 
contagious disease over the network of people, the spread is \emph{not deterministic}. Hence our 
work can be seen as a first attempt to make the model of \citet{GoyalJKKM16} closer to real world
applications. However, the change in the spread of attack comes with a significant increase in the 
complexity of the game as even computing the utilities of the players in our game is \#P-complete.
While \citet{FriedrichIKLNS17} have shown that  best responses for players can be computed in polynomial 
time under various attack models, the question of whether best response dynamics
converges to an equilibrium network is open in the model of 
\citet{GoyalJKKM16}.

Similar to \citet{GoyalJKKM16} we show that diverse equilibrium networks 
can form in our game. While they show that
all equilibrium networks over $n\ge 4$ players have at most $2n-4$ edges, we show
that the number of edges in any equilibrium network is at most $O(n\log n)$ 
and this bound is tight up to a logarithmic factor.
Furthermore, \citet{GoyalJKKM16} show that the social welfare is asymptotically $n^2-o(n^2)$ 
in non-trivial equilibrium networks. Their definition of non-trivial networks requires
the network to have at least one immunized vertex and one edge.
 In the regime where the cost of immunization is high, 
the game of \citet{GoyalJKKM16} only admits disconnected and fragmented 
equilibrium networks due to the deterministic spread of the attack.
Such networks (even excluding the empty graph) can have social welfare
as low as $\Theta(n)$. 
We show that any low density equilibrium network of our game 
enjoys a social welfare of $\Theta(n^2)$ as long as the network contains at least one edge.

\citet{Kliemann11} introduced a network formation game with 
reachability benefits and an attack on the formed network that destroys exactly one link
with no further spread. Their equilibrium networks are sparse and also admit high social
welfare as removing an edge can create at most two connected components. 
\citet{KliemannSS17} extend this to allow attacks on vertices while focusing 
on swapstable equilibria. 

\citet{BlumeEKKT11} introduce a network formation game with bilateral
edge formation. They assume both edge  and link failures can happen simultaneously 
but independent of the failures so far in the network. These differences make it hard to 
directly compare the two models. They show a tension between optimal and 
stable networks and exploring such properties in depth in our model is an interesting 
direction.

Finally, network formation games, with a variety of different connectivity benefit models, 
have been studied extensively in computer science see e.g.~\cite{BalaG00, BlumeEKKT11, Kliemann11}. 
We refer the reader to the related work section of~\citet{GoyalJKKM16} for a comprehensive
summary of other related work especially on the topic of optimal security choices for networks.
\section{Examples of Equilibrium Networks}
\label{sec:examples}
In this section we show that a diverse set of topologies can emerge in the equilibrium 
of our game. Similar to the models of~\citet{BalaG00}~and~\citet{GoyalJKKM16} the empty graph can 
form in the equilibrium of our game when $\c\geq1$. 
Moreover, similar to both models, trees can form in equilibria (See the left panel of Figure~\ref{fig:eq-examples}).
Finally, while~\citet{GoyalJKKM16} show that in the regime 
of their game where the cost of immunization is high (so no vertex would immunize)
no connected network can form in equilibria due to the deterministic spread of the attack, 
we show that connected networks indeed can form in the equilibria of our game
(See Figure~\ref{fig:eq-examples}).

\begin{figure}[ht!]
\centering
\begin{subfigure}[b]{0.14\textwidth}
\centering
\begin{tikzpicture}
[scale=0.5, every node/.style={circle,fill=white, draw=black}, gray node/.style = {circle, fill = blue, draw}]
\node (1) at  (0, 0){};
\node (2) at  (-2, 0){};\node (3) at  (-1.41, 1.41){};\node (4) at  (0, 2){};
\node (5) at  (1.41, 1.41){};\node (6) at  (2, 0){};\node (7) at  (1.41, -1.41){};
\node (8) at  (0,-2){};\node (9) at  (-1.41, -1.41){};
\draw[->] (2) to (1);\draw[->] (3) to (1);\draw[->] (4) to (1);
\draw[->] (5) to (1);\draw[->] (6) to (1);\draw[->] (7) to (1);
\draw[->] (8) to (1);\draw[->] (9) to (1);
\end{tikzpicture}
\end{subfigure}
\begin{subfigure}[b]{0.14\textwidth}
\centering
\begin{tikzpicture}
[scale = 0.5, every node/.style={circle, fill=white, draw=black}, gray node/.style = {circle, fill = blue, draw}]
\node (1) at  (0.59, 0.59){};\node (3) at  (0.59, 3.41){};
\node (5) at  (3.41, 3.41){};\node (7) at  (3.41, 0.59){};
\node (2) at  (0, 2){};\node (4) at (2, 4){};
\node (6) at (4, 2){};\node (8) at (2, 0){};
\draw[->] (1) [out=135,in=270] to (2);\draw[->] (2) [out=90,in=225] to (3);
\draw[->] (3) [out=45,in=180] to (4);\draw[->] (4) [out=0,in=135] to (5);
\draw[->] (5) [out=315,in=90] to (6);\draw[->] (6) [out=270,in=45] to (7);
\draw[->] (7) [out=225,in=0] to (8);\draw[->] (8) [out=180,in=315] to (1);
\end{tikzpicture}
\end{subfigure}
\begin{subfigure}[b]{0.14\textwidth}
\centering
\begin{tikzpicture}
[scale = 0.55, every node/.style={circle, fill=white, draw=black}, gray node/.style = {circle, fill = blue, draw}]
\node (1) at  (0.59, 0.59){};\node (3) at  (0.59, 3.41){};
\node (5) at  (3.41, 3.41){};\node (7) at  (3.41, 0.59){};
\node (2) at  (0, 2){};\node (4) at (2, 4){};
\node (6) at (4, 2){};\node (8) at (2, 0){};
\node (11) at (2, 1){};\node (10) at (2, 2){};
\node (9) at (2, 3){};
\draw[->] (1) [out=135,in=270] to (2);\draw[->] (3) [out=225,in=90] to (2);
\draw[->] (4) [out=180,in=45] to (3);\draw[->] (4) [out=0,in=135] to (5);
\draw[->] (5) [out=315,in=90] to (6);\draw[->] (7) [out=45,in=270] to (6);
\draw[->] (8) [out=0,in=225] to (7);\draw[->] (8) [out=180,in=315] to (1);
\draw[->] (8) to (11);\draw[->] (11) to (10);\draw[->] (9) to (10);\draw[->] (4) to (9);
\end{tikzpicture}
\end{subfigure}
\caption{From left to right: hub-spoke, cycle and linear-paths network. A directed arrow
determines the vertex that purchases the edge. We omit the details
of the regime of parameters ($\c$ and $p$) in which such networks can 
form in the equilibria of our game.
\label{fig:eq-examples}}
\end{figure}
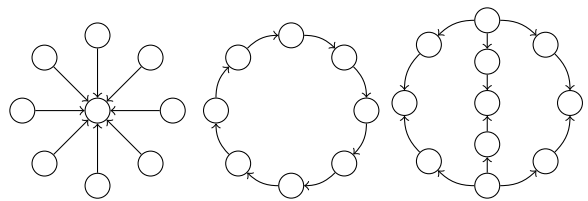

We remark that pure strategy equilibria exist in all parameter regimes of our game. 
When $\c \geq 1$, the empty network can form in equilibria for all $p$. When $\c<1$ 
a cycle or two disconnected hub-spoke structure of size $n/2$ can form in equilibria
depending on whether $p$ is far or close to 1 ($(1-\omega(1/n))$ and $(1-o(1))$, 
respectively).

Examples in Figure~\ref{fig:eq-examples} show that denser networks can form in equilibria 
compared to the model of~\citet{BalaG00} and the high immunization cost regime of the 
model of~\citet{GoyalJKKM16}. So it is natural to ask how dense equilibrium networks 
can be. We study this question in Section~\ref{sec:density} and show an upper bound of
$O(n \log n)$ on the density of the equilibrium networks. Since the examples in Figure~\ref{fig:eq-examples}
have $\Theta(n)$ edges, our upper bound is tight up to a logarithmic factor. 

Moreover, while all the 
equilibrium networks in Figure~\ref{fig:eq-examples} are connected, there might still exist
equilibrium networks in our game that are highly disconnected. In Section~\ref{sec:welfare}
we show that any equilibrium network with at least one edge contains a large connected component.
However, even with the guarantee of a large connected component, there might still be
concerns that the equilibrium networks can become highly fragmented after the attack. In Section~\ref{sec:welfare}
we show that as long as the equilibrium network is not too dense, the social welfare is lower bounded 
by $\Theta(n^2)$ i.e. a constant fraction of the social welfare achieved in the attack-free game.

We obtain these structural results even tough we cannot
compute utilities nor even verify that an equilibrium has reached
due to computational barriers. We view these results as are our
most significant technical contributions. 
\section{Edge Density}
\label{sec:density}
We now analyze the edge density of equilibrium networks. 
\begin{thm}
\label{thm:density1}
Any equilibrium network on $n$ vertices has 
$O\left(n\log n/p\right)$ edges.
\end{thm}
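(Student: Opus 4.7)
The approach is to argue by contradiction: suppose an equilibrium network $G$ has $\omega(n\log n/p)$ edges, and exhibit a single-edge profitable deviation for some vertex.

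\textbf{Step 1 (Dense core).} Since the average degree of $G$ is $\Omega(\log n/p)$, a Mader-style extraction---iteratively stripping vertices of low degree or removing small cuts---yields an induced subgraph $H \subseteq G$ whose edge-connectivity is $k = \Omega(\log n/p)$. In particular every vertex of $H$ has at least $k$ neighbors inside $H$.

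\textbf{Step 2 (Reliability of $H$).} Next I would argue that the random subgraph $H[p]$ is connected with probability $1-o(1/n)$. Invoking Karger's bound on near-minimum cuts (the number of cuts of weight at most $\alpha k$ is at most $n^{2\alpha}$), each such cut fails to survive percolation with probability at most $(1-p)^{\alpha k}\le e^{-\alpha p k}$; summing over $\alpha$ and using $pk = \Omega(\log n)$ gives the bound. Consequently, with high probability over the cascade randomness, $H$ behaves in an ``all or nothing'' fashion: either every vertex of $H$ is killed or every vertex of $H$ survives the attack.

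\textbf{Step 3 (Profitable deviation).} Pick any vertex $v\in H$ and any internal edge $(v,u)$ with $u\in H$; I would show $v$ profits by dropping this edge. Dropping it saves exactly $\c>0$, so it suffices to bound the change in $v$'s expected component size by less than $\c$. I condition on where the attack starts. If the attack starts inside $H$, then by Step 2 w.h.p.\ $v$ dies under either strategy, contributing the same expectation. If the attack starts outside $H$, the edge $(v,u)$ is internal to $H$ and hence irrelevant to how the cascade first enters $H$; moreover $H \setminus \{(v,u)\}$ still has min-cut $\geq k-1 = \Omega(\log n/p)$, so Step 2's conclusion still applies and the outcome inside $H$ is again all-or-nothing. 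In both cases the contribution to $v$'s expected component size changes by only an $o(1)$ term, yet cost drops by $\c$, contradicting equilibrium.

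\textbf{Main obstacle.} The main technical difficulty lies in Steps 1--2: fixing constants so that the reliability failure probabilities are genuinely $o(1/n)$, and hence their cumulative effect on $\E[\CC_v(\cdot)]$ is strictly below the constant $\c$. One must track constants through the Mader extraction to guarantee $k$ is a large enough multiple of $\log n/p$ to drive Karger's enumeration, and carefully handle rare cascade events---for instance when the spread from outside only partially covers $H$---so that their contribution to $v$'s expected utility cannot compensate for the cost savings.
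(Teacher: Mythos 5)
Your proposal is correct and follows essentially the same three-step blueprint as the paper's proof: extract a dense induced subgraph $H$ with min-cut $\Omega(\log n/p)$ by repeatedly splitting along small cuts (paper's Lemma~\ref{lem:cut}), show this makes $H[p]$ connected with probability $1-o(1/n)$ (paper cites \citet{noga}; your Karger cut-counting route is the standard alternative derivation of the same reliability bound), and conclude that any $H$-internal edge is a profitable drop because the post-attack outcome on $H$ is all-or-nothing (paper's Lemma~\ref{lem:certain}, which phrases the third step via a ``cycle'' argument instead of your ``min-cut $\geq k-1\geq 2$'' argument, but these are equivalent). Two small notes: ``stripping vertices of low degree'' alone would not yield large edge-connectivity (two cliques joined by a bridge), so you indeed need the cut-splitting version; and the rare event you flag---the cascade reaching but not covering $H$---is handled exactly as you suspect, by charging its $o(1/n)$ probability against a worst-case loss of $n$, giving an $o(1)$ dent in expected benefit against a fixed saving of $\c$.
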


The proof of Theorem~\ref{thm:density1} is due 
to the following observations which 
we formally state and prove next. At a high level, 
we first show that if $G$ has large 
enough edge density, then $G$ contains an induced 
subgraph $H$ whose minimum cut size is large. We 
then show a large minimum cut size implies that $H[p]$ 
is connected with high probability. This 
means that in almost all attacks that infect a vertex in $H$, 
all vertices in $H$ will get infected.
So a vertex in $H$ would have a beneficial deviation in 
the form of dropping an edge; which 
contradicts the assumption that $G$ was an equilibrium 
network. This proves that equilibrium networks cannot be 
too dense.

More formally,
we first show in Lemma~\ref{lem:cut} that if $G$ is \emph{dense enough} 
it contains a subgraph $H$ with a 
minimum cut size, denoted by $\alpha(H)$, of at least 
$\Omega\left(\log n/p\right)$.
\begin{lemma}
\label{lem:cut}
Let $G=(V,E)$ be a graph on $n$ vertices. There exists a 
constant $k$ such that if  $|E| \geq k n \log n/p$ 
then $G$ contains an induced subgraph $H$ 
with $\alpha(H)\geq k\log n/p$.
\end{lemma}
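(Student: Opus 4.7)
The plan is a two-stage extraction that converts the edge-count hypothesis first into a minimum-degree lower bound on an induced subgraph, and then into a minimum-cut lower bound on a further induced subgraph.

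For the first stage I would run the standard greedy peeling argument. Fix a threshold $\tau = \Theta(\log n/p)$, concretely $\tau = k\log n/(4p)$. Iteratively delete any vertex whose current degree is strictly less than $\tau$. Each such deletion removes fewer than $\tau$ edges, and at most $n$ vertices can ever be deleted, so fewer than $\tau n = (k/4)\, n\log n/p$ edges are destroyed along the way. Since we started with at least $kn\log n/p$ edges, the process halts at a nonempty induced subgraph $G'$ of $G$ with $\delta(G') \ge \tau$.

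For the second stage I would invoke Mader's theorem: any graph with minimum degree at least $4\ell$ contains a subgraph of edge connectivity at least $\ell$. Applying this to $G'$ with $\ell = \tau/4 = \Theta(\log n/p)$ produces a subgraph of $G$ with edge connectivity at least $\ell$, and passing to the induced subgraph on the same vertex set can only add edges and therefore preserves the edge connectivity. After one final rescaling of the constant $k$ (we are free to take the $k$ in the hypothesis as large as needed to match the smaller $k$ in the conclusion), this delivers the desired induced subgraph $H$ with $\alpha(H)\ge k\log n/p$.

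The main obstacle is the second stage. Minimum degree alone does not imply large minimum cut, as witnessed by two cliques joined by a single edge, so some global structural argument is required, and this is exactly what Mader's theorem supplies. A self-contained replacement would proceed by iterating the following step on $G'$: if the current subgraph has a cut of size below the target, pass to the side that retains more edges, tracking a potential of the form $|E(H)|-c\,|V(H)|$ for a suitably chosen $c$ to ensure that the process terminates at a nonempty subgraph on which no small cut exists. Either route yields the same $\Theta(\log n/p)$ bound on edge connectivity, which combined with the peeling stage establishes the lemma.
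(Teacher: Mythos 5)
Your proof is correct in outline but takes a genuinely different route from the paper's. The paper proves the lemma directly via a recursive cut decomposition: if $\alpha(G)\geq k\log n/p$ we are done; otherwise split $G$ along a cut of size $<k\log n/p$ and recurse on both sides, building a decomposition tree whose leaves are singletons or induced subgraphs with minimum cut $\geq k\log n/p$. The tree has at most $n-1$ internal nodes, so the decomposition removes fewer than $k(n-1)\log n/p$ edges -- strictly less than the $|E|\geq kn\log n/p$ edges available -- so not every leaf can be a singleton, and any non-singleton leaf is the desired $H$. This argument is self-contained, never mentions degrees, and, notably, yields the \emph{same} constant $k$ in hypothesis and conclusion, which is literally what the lemma asserts. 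Your two stages -- greedy peeling to minimum degree $\Theta(\log n/p)$, then Mader's theorem -- are valid, but they are a detour: Mader gives \emph{vertex} connectivity, which is more than needed, and the chain loses a constant factor, so from $|E|\geq kn\log n/p$ you conclude $\alpha(H)\geq k\log n/(16p)$, not $k\log n/p$. The ``rescaling'' you gesture at does not recover the lemma's literal statement (to obtain $\alpha(H)\geq k\log n/p$ you would need the \emph{stronger} hypothesis $|E|\geq 16kn\log n/p$); this is harmless for Theorem~\ref{thm:density1}, which only needs the $O(n\log n/p)$ bound up to constants, but it is a mismatch with the lemma as worded. Finally, your ``self-contained replacement'' for Mader -- iterating small cuts and following the denser side -- is essentially the paper's decomposition-tree argument in a one-sided form; for that route the peeling stage is superfluous (run the cut recursion on $G$ directly), and the clean invariant to carry is $|E(H)|>(k\log n/p)(|V(H)|-1)$, which is inherited by at least one side of any small cut and fails exactly at singletons, rather than the $|E(H)|-c|V(H)|$ potential you sketched, which decays under halving.
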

\begin{proof}
If $\alpha(G)\geq k\log n/p$  then $G$ is the desired 
graph. Otherwise, there is a cut of size
less than $k \log n/p$ that partitions $G$ into two graphs 
$G_1$ and $G_2$. Repeat this process at
$G_1$ and $G_2$, and build a decomposition tree $T$ in 
this manner. Any leaf of this tree $T$ is
either a singleton vertex or a graph where the minimum 
cut size is at least $k \log n/p$. If
at least one leaf in $T$ satisfies 
the latter property, then we are done and this is our 
desired graph $H$. We now argue that it can not be the case 
that all leaf vertices in $T$ are singletons.

To see this, note that there can be at most $n-1$ 
internal vertices in $T$ and each internal
vertex in $T$ corresponds to removing up to 
$k \log n/p$ edges from $G$. Thus the 
decomposition process removes at most 
$k(n-1)\log n/p$ edges. On the other hand, 
$G$ has at least $k n \log n/p$ edges. 
It follows that not all leaves of $T$ can
be singleton vertices.
\end{proof}

We then show that if $\alpha(G)$ is 
$\Omega\left(\log n/p\right)$ then 
with high probability $G[p]$ is connected.
\begin{lemma}[\citet{noga}]
\label{lem:noga}
Let $G=(V,E)$ be a graph on $n$ vertices. 
Then for any constant $b>0$ there exists 
a constant $k(b)$ such that
if $\alpha(G) \geq k(b) \log n/p$ then with 
probability at least $1-n^{-b}$,
$G[p]$ is connected.\footnote{The statement in \citet{noga} requires
$\alpha(G[p]) \geq k(b) \log n$. Since $\alpha(G[p])=\alpha(G)p$
this translates to the condition stated in Lemma~\ref{lem:noga}.}
\end{lemma}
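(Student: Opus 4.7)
The plan is to control the failure event $\{G[p]\text{ disconnected}\}$ by a union bound over all cuts of $G$, using Karger's classical counting theorem for approximate minimum cuts. The starting observation is that $G[p]$ is disconnected if and only if there is a nonempty proper $S\subset V$ such that every edge of $G$ crossing the cut $(S,V\setminus S)$ is deleted when forming $G[p]$. Since edges are retained independently, a fixed cut of size $t$ fails to survive with probability exactly $(1-p)^t\le e^{-pt}$.

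The second ingredient is Karger's theorem: writing $c=\alpha(G)$ for the minimum cut size, the number of cuts of $G$ of size at most $\alpha c$ is at most $n^{2\alpha}$ for any real $\alpha\ge 1$, and in particular the number of cuts of size exactly $t$ is $O(n^{2t/c})$. Combining this with the per-cut bound and summing over $t\ge c$ yields
\[
\Pr\bigl[G[p]\text{ disconnected}\bigr]\;\le\;\sum_{t\ge c} n^{2t/c}(1-p)^t\;\le\;\sum_{t\ge c}\exp\!\left(\frac{2t\ln n}{c}-pt\right).
\]
Plugging in the hypothesis $c\ge k(b)\log n/p$ gives $2\ln n/c\le 2p/k(b)$, so each summand is bounded by $\exp\bigl(-pt\,(1-2/k(b))\bigr)$. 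For $k(b)>2$ this is a geometric series in $t$ with ratio strictly less than one; its sum starting at $t=c$ is $O(n^{-(k(b)-2)})$, where the hidden constant depends only on $p$ (which is treated as a constant). Choosing $k(b)$ large enough, e.g.\ $k(b)=b+3$, drives this below $n^{-b}$ as required.

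The main obstacle is that the union bound ranges over cut sizes $t$ up to $\binom{n}{2}$, so Karger's polynomial-in-$n$ counting bound must not overwhelm the exponential decay $(1-p)^t$. The hypothesis $\alpha(G)\ge k(b)\log n/p$ is exactly calibrated so that at $t=c$ the two competing factors are in balance, and for all larger $t$ the exponential factor dominates, producing the geometric tail. Two minor points to be careful about are: first, Karger's bound is typically stated for integer multiples of $c$, but a standard real-valued version (or a per-$t$ rounding argument) handles arbitrary $t$; second, the geometric-series denominator contributes a harmless $O(1/p)$ factor which is absorbed into the constant $k(b)$.
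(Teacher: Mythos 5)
The paper does not prove Lemma~\ref{lem:noga}; it simply cites it to \citet{noga} and uses the footnote to translate the hypothesis from $\alpha(G[p])$ to $\alpha(G)$. So there is no in-paper argument for you to match against, and your task here was really to reconstruct a proof of a result the authors take as a black box. Your reconstruction is correct, and it is the standard argument for this type of random-subgraph connectivity bound: union-bound over cuts, control the per-cut deletion probability by $(1-p)^t\le e^{-pt}$, and control the number of cuts of weight $t$ by Karger's cut-counting theorem ($\le n^{2t/c}$ for $t\ge c$), so that the hypothesis $c\ge k(b)\log n/p$ makes the exponential decay beat the polynomial cut count starting exactly at $t=c$, and the remaining series is geometric.

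Two small points worth stating explicitly if you write this up. First, the bound $2\ln n/c \le 2p/k(b)$ silently assumes $\log$ means natural log; if the authors intend a different base you simply absorb the conversion constant into $k(b)$, so nothing breaks, but say so. Second, the tail sum gives $\Pr[G[p]\ \text{disconnected}]\le \dfrac{n^{-(k(b)-2)}}{1-e^{-p(1-2/k(b))}}=O\!\left(\dfrac{n^{-(k(b)-2)}}{p}\right)$, and the $O(1/p)$ prefactor is a genuine $n$-independent constant only because the paper treats $p$ as a constant; in a regime where $p\to 0$ with $n$ you would need $k(b)$ slightly larger (e.g.\ $k(b)=b+3$ suffices once $n>1/p$). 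Neither point is a gap under the paper's standing assumption that $p$ is constant, so your argument stands as written.
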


We now define a property which we call 
\emph{almost certain infection} and show
that no equilibrium network can contain an 
induced subgraph satisfying this property.
\begin{definition}
\label{def:certain}
Let $G=(V,E)$ be a graph on $n$ vertices and
let $H$ be a subgraph of $G$ on more than one vertex. 
$H$ has the \emph{almost certain infection} property if whenever 
any vertex in $H$ is attacked, then with probability at least
$1-o(1/n)$ the attack spreads to every vertex in $H$.
\end{definition}

\begin{lemma}
\label{lem:certain}
Let $G=(V,E)$ be an equilibrium network on $n$ vertices. 
$G$ cannot contain an induced subgraph $H=(V', E')$ such that 
$H$ satisfies the almost infection property.
\end{lemma}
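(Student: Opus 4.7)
Suppose for contradiction $G$ is an equilibrium containing such an induced subgraph $H=(V',E')$ with almost certain infection on $|V'|>1$ vertices. The plan is to find a beneficial single-edge deviation. First I would pick any edge $e=(v,w)\in E'$; both endpoints lie in $V'$, so in particular the purchaser, say $v$, lies in $V'$. (In the settings in which this lemma is applied, e.g.\ when $H$ is obtained through Lemmas~\ref{lem:cut} and~\ref{lem:noga}, such an edge exists because $H$ is dense.) I would then analyze the deviation in which $v$ drops $e$: since this saves $\c$ in cost, it suffices to show the resulting decrease $\Delta_v(e)$ in $v$'s expected connectivity benefit is strictly less than $\c$.

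To bound $\Delta_v(e)$, I would couple $G[p]$ and $(G\setminus e)[p]$ so they use the same randomness on every edge other than $e$. Writing $Y$ for this shared random subgraph---distributed as $(G\setminus e)[p]$---we have $G[p]=Y$ with probability $1-p$ (contributing zero to $\Delta_v(e)$) and $G[p]=Y\cup\{e\}$ with probability $p$. Conditioning on a realization of $Y$: if $v$ and $w$ lie in the same connected component of $Y$, then $Y$ and $Y\cup\{e\}$ have identical component structures and the per-attack difference in $v$'s component size vanishes. Otherwise, a direct case split over whether the attacked vertex $u$ sits in $v$'s component, in $w$'s component, or in neither bounds the discrepancy summed over $u$ by $O(n^2)$ in absolute value.

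The key input is that the almost certain infection property essentially carries over from $G$ to $G\setminus e$. Almost certain infection is equivalent to $\Pr[V'\text{ lies in one component of }G[p]]\geq 1-o(1/n)$; conditioning on whether $e$ is retained in $G[p]$ and rearranging (using that $p$ is a constant bounded away from $1$) yields $\Pr[V'\text{ lies in one component of }(G\setminus e)[p]]\geq 1-o(1/n)$ as well. Since $v,w\in V'$, they lie in the same component of $Y$ with probability at least $1-o(1/n)$. Combining with the worst-case $O(n^2)$ bound, $\Delta_v(e)\leq (p/n)\cdot O(n^2)\cdot o(1/n)=o(1)$, which is strictly less than the constant $\c$ for all sufficiently large $n$. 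Hence $v$ strictly profits by dropping $e$, contradicting the equilibrium assumption on $G$.

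The main obstacle I foresee is the transfer of the almost certain infection property from $G$ to $G\setminus e$, together with the need to track the sign of the component-size difference in the case analysis carefully so that the tiny $o(1/n)$ failure probability can absorb the worst-case $O(n^2)$ discrepancy. The rest is elementary coupling and bookkeeping.
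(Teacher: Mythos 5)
Your proof reaches the correct conclusion but by a route that genuinely differs from the paper's, and there is one accounting slip worth flagging.

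The paper's proof first establishes that $H$ must contain a cycle (arguing that a leaf $u$ of a tree-like $H$ cannot satisfy almost certain infection since its single $H$-edge is absent from $G[p]$ with constant probability $1-p$), then picks an edge $(u,v)$ on that cycle. It introduces the event $\xi$ that the attack reaches some vertex of $H$: conditioned on $\xi$, with probability $1-o(1/n)$ all of $H$ dies so $u$'s benefit is $0$, while conditioned on $\bar\xi$ the intact cycle keeps $u$ and $v$ connected even without the dropped edge. You instead pick an arbitrary edge $e=(v,w)\in E'$, couple $G[p]$ with $(G\setminus e)[p]$ on the shared randomness $Y$, and use a \emph{transfer} observation: almost certain infection for $G$ (i.e.\ $\Pr[V'$ in one component of $G[p]]\ge 1-o(1/n)$) forces the same bound for $(G\setminus e)[p]$ as long as $p$ is bounded away from $1$. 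This bypasses the cycle step completely, and is in fact somewhat more robust: the paper's leaf argument implicitly treats $u$ as disconnected from the rest of $V'$ once its one $H$-edge is dropped, but $u$ may still be connected to $V'\setminus\{u\}$ in $G[p]$ through vertices \emph{outside} $V'$; your transfer argument makes no such claim and handles that case automatically. Both arguments then close the same way, charging the rare failure event against the constant edge cost $\c$.

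One slip: you assert that the event $\{G[p]=Y\}$ ``contributes zero to $\Delta_v(e)$.'' It does not. Even when the spread graph is the same, $v$'s post-attack component is computed in $G$ for the original strategy and in $G\setminus e$ after the deviation, and these restrict to different surviving graphs that differ exactly on the edge $e$; if $v$ and $w$ both survive, $e$ may bridge $v$ to a strictly larger post-attack component in $G$. The repair is already in your next sentence: when $v$ and $w$ share a component of $Y$, a surviving $Y$-path in $G\setminus e$ connects them among the survivors, so the difference in $v$'s post-attack component size vanishes \emph{regardless} of whether $e$ is retained in $G[p]$. Applying that same case split to the $e$-not-retained branch adds a $(1-p)$-weighted term that is also $o(1)$; the final bound merely loses the stray factor of $p$, and the contradiction $\Delta_v(e)<\c$ still holds.
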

\begin{proof}
Consider any equilibrium graph $G$ that 
violates the assertion of the claim. Let $H$
be an induced subgraph of $G$ with the 
almost certain infection property. 
We first prove that $H$ contains a cycle. 
Assume by the way of contradiction that $H$ does not 
have a cycle, so $H$ is a collection of trees.
Let $u$ be any leaf in $H$. Then $u$ is 
incident to at most one edge in $H$.
Therefore, with probability $1-p$, this edge is not 
in $G[p]$ and $u$ is not connected to any other 
vertices in $G[p]$. So $H$ cannot have the almost 
certain infection property
(Recall that we assumed $p$ is a constant 
independent of the number of players $n$). 
This means that $H$ contains a cycle $C$. Let $(u, v)$ 
be an edge on the cycle $C$. Assume
without loss of generality that $u$ purchased the edge $(u,v)$.

Now let $\xi$ be the event that an attack 
propagates to some vertex in $H$ after the attack.
Then conditioned on $\xi$, with probability at least
$1-o(1/n)$, all vertices in $H$ 
die. Hence vertex $u$ in $H$ has negative utility. On 
the other hand, if $\xi$ does not occur, then the utility of $u$ 
remains unchanged even if we remove the edge $(u, v)$. 
Thus vertex $u$ can strictly improve her utility in this case 
by dropping the edge $(u, v)$. 
A contradiction to the fact that $G$ is an equilibrium network.
\end{proof}

We are now ready to prove Theorem~\ref{thm:density1}.

\noindent\textit{Proof of Theorem~\ref{thm:density1}.}
Assume by way of contradiction that $G$ has more 
than $kn\log n/p$ edges where $k$ is the constant 
in Lemma~\ref{lem:cut}. Then 
by Lemma~\ref{lem:cut}, $G$ contains a subgraph 
$H=(V', E')$ such that $\alpha(H)\geq k\log n/p$. 
Since $|E'|\geq |V'|$, by Lemma~\ref{lem:noga}, 
$H$ has the almost certain infection property. 
However, $G$ cannot be an equilibrium network 
by Lemma~\ref{lem:certain}.
\qed

The most interesting regime for the probability of spread $p$
is when $p$ is a constant independent of $n$. While the upper
bound in Theorem~\ref{thm:density1} holds for all $p$, it becomes 
vacuous as $p$ gets small i.e. it becomes bigger than the 
trivial bound of $n^2/2$ when $p\leq k\log n/n$ 
for constant $k$. In Section~\ref{sec:edge-small-p}
we analyze the edge density of equilibrium networks 
in the regime where $p < 1/n$. We 
show that the number of edges in any equilibrium 
network is bounded by $O(n)$ in this regime. 
To prove the density result we utilize properties of 
the Galton-Watson branching process and 
random graph model of Erd{\"{o}}s-R\'enyi, as well 
as tools from extremal graph theory.
\subsection{Small $p$ Regime}
\label{sec:edge-small-p}
In this section we focus on the regime where $p < 1/n$
and prove the following upper bound on the edge density. 

\begin{thm} 
    \label{thm:smallp}
    Let $p=\kappa/n$ for some constant $\kappa <1$.
    Let $G=(V,E)$ be an equilibrium network over $n$ vertices.
    Then for sufficiently large $n$, $|E|\le \max\{1/\c, 24000\} n$.

\end{thm}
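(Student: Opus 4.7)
The plan is to assume for contradiction that $|E| > \max\{1/\c, 24000\} n$ and exhibit an edge whose purchaser has a strictly profitable unilateral deletion. The principal probabilistic tool is that with $p = \kappa/n$ and $\kappa < 1$, the random graph $G[p]$ lies in the deep subcritical regime of percolation: the exploration of any vertex's $G[p]$-component is stochastically dominated by a Galton--Watson branching process of offspring mean at most $\kappa < 1$, so $\E[|C_u(G[p])|] \leq 1/(1-\kappa)$ and $\Pr[|C_u(G[p])| \geq k] \leq e^{-\Omega(k)}$. Since the attack kills exactly the $G[p]$-component of a uniformly chosen origin, the dead set $D$ has $\E[|D|] = O(1)$, each vertex dies with probability $O(1/n)$, and the size tail decays exponentially in $|D|$.

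For the main regime $\c \geq 1/24000$, the first step is an extremal graph theory peeling: since $|E| > 24000 n$, iteratively deleting any vertex of current degree less than $12000$ removes fewer than $12000 n$ edges in total and leaves a non-empty induced subgraph $H$ with minimum degree at least $12000$. In $H$ the high minimum degree forces local redundancy around every edge: by Menger's theorem applied inside $H - (u,w)$, or directly by counting short bypass paths through the common neighbors and neighborhoods of $u$ and $w$ in $H$, any edge $(u,w) \in E(H)$ admits many internally vertex-disjoint short alternative $u,w$-paths in $G - (u,w)$. To derive the contradiction, I would couple the attacks on $G$ and $G - (u,w)$ by sharing the $G[p]$-indicators of all other edges and sampling the $(u,w)$ indicator independently. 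On the high-probability event that $(u,w) \notin G[p]$, that $|D|$ is smaller than the $u,w$-vertex-connectivity of $G - (u,w)$, and that $u, w \notin D$, both copies produce the same dead set and a bypass path survives, so the edge $(u,w)$ is redundant for $u$'s post-attack component and contributes zero to the marginal benefit. The complementary events contribute at most $n$ each and together have probability at most $p + e^{-\Omega(C)} + O(1/n)$, yielding a total marginal benefit strictly less than $\c$ when the constants are chosen appropriately; this contradicts the equilibrium condition that every purchased edge returns marginal benefit at least $\c$.

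For the complementary regime $\c < 1/24000$, the bound $|E| \leq n/\c$ instead comes from a direct per-vertex accounting: since $B_u(s_u) \leq n$ for every vertex and every purchased edge must contribute at least $\c$ in marginal benefit, an averaging argument across vertices yields $|E| \leq n/\c$. The main obstacle in the entire argument is quantitatively closing the interplay between the Galton--Watson tail (exponential in the vertex-cut size), the extremal peeling (which only provides a constant minimum degree in $H$), and the trivial upper bound $|C_w| \leq n$ on the connectivity loss; threading these three pieces into a single explicit constant that works for all sufficiently large $n$ is what dictates the value $24000$ and forces the delicate case split between the regimes $\c \geq 1/24000$ and $\c < 1/24000$.
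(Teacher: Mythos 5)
Your high-level plan (contradiction via a profitable edge deletion, subcritical percolation) matches the paper, but the two load-bearing steps you sketch do not actually close, and the paper's proof is substantially more delicate in both places. First, the structural step: min degree $\geq 12000$ in a peeled subgraph $H$ does \emph{not} imply that any (or even some) edge of $H$ has many internally disjoint bypass paths. Take $H$ to be two copies of $K_{12001}$ joined by a single bridge $(u,w)$: min degree is $12000$ but the bridge has zero alternative paths. Menger's theorem bounds disjoint paths by local connectivity, not by degree, and high min degree does not force high connectivity. The paper's Lemma~\ref{lem:cut2} is precisely the nontrivial extremal statement you need here — it shows that $|E|\geq 2.5\gamma(n-\gamma)-1$ forces \emph{some} edge whose endpoints need $\gamma+1$ vertex deletions to disconnect (other than via the edge itself) — and its proof is a careful minimal-counterexample induction on the graph structure, not a $k$-core peeling.

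Second, the probability accounting: even granting an edge with vertex-cut $C = \Theta(k)$ a (large) constant, your bound on the marginal benefit is $n\bigl(p + e^{-\Omega(C)} + O(1/n)\bigr)$, and since $C$ is a constant the term $n\,e^{-\Omega(C)}$ is $\Theta(n)$, not $o(1)$ — so no contradiction follows. The exponential Galton--Watson tail only helps when the cut exceeds $\Omega(\log n)$, which is exactly why the paper splits into the cases $\gamma > \beta\log n$ (where a largest-component bound of $O(\log n)$ with probability $1-o(1/n)$ suffices) and $\gamma \leq \beta\log n$ (which forces $k = O(\log n)$, after which Lemma~\ref{lem:logc} establishes the much finer bound that a \emph{random} vertex's $G[p]$-component exceeds $k/3$ with probability only $2\c/(3n)$, via the $H$/$L$ degree split, Lemma~\ref{lem:higd}, Lemma~\ref{lem:degdist}, and the branching-process Corollary~\ref{cor:gw}). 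Finally, your $\c < 1/24000$ regime is not a separate easy case: summing removal-marginals does not lower-bound total benefit unless the benefit function is submodular, and here the connectivity benefit under cascading attack is not even monotone in edges (adding an edge can increase infection risk). In the paper, $k \geq 1/\c$ is simply one of the two hypotheses fed into Lemma~\ref{lem:degdist}, not the basis of an averaging argument, and the proof is unified across $\c$.
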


We prove Theorem~\ref{thm:smallp} by contradiction and show that
if the equilibrium graph has more than $\max\{1/\c, 24000\} n$ edges, 
there exists a beneficial deviation in the form of dropping an edge
for one of the players.

In order to prove Theorem~\ref{thm:smallp}, we need structural results
stated in Lemmas~\ref{lem:cut2}~and~\ref{lem:logc}.
First, consider an edge $(u, v)$ purchased by vertex $u$. Purchasing this edge
would not have increased the connectivity benefit of $u$ unless, after
some attack, the edge $(u,v)$ is the only path connecting $u$ to 
$v$ (and possibly other vertices that are only reachable through $v$). 
In Lemma~\ref{lem:cut2} (which we will prove later) we show that if a
graph is dense enough, then there exists an edge $(u, v)$ such that 
many vertices should be deleted in order to make $(u,v)$ the 
only remaining path connecting $u$ and $v$.

\begin{lemma}
\label{lem:cut2}
Let $G=(V,E)$ be a graph on $n>3 \gamma$ vertices with 
$|E| \geq 2.5\gamma(n-\gamma)-1$ for some $\gamma \in \N$. Then 
there exist two vertices $v_1$ and $v_2 \in V$ such that $(v_1, v_2)\in E$,
and at least $\gamma+1$ vertices need to be deleted so that the only
path from $v_1$ to $v_2$ in $G$ is through the direct edge $(v_1, v_2)$.
\end{lemma}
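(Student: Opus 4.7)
The plan is to prove the contrapositive: if every edge $(u,v) \in E$ is \emph{bad}, meaning at most $\gamma$ vertex deletions suffice to leave the direct edge as the only $u$--$v$ path, then $|E| \leq 2.5\gamma(n-\gamma) - 2$, contradicting the hypothesis. By Menger's theorem, badness of $(u,v)$ is equivalent to the existence of a set $S_{uv} \subseteq V \setminus \{u,v\}$ with $|S_{uv}| \leq \gamma$ whose removal from $G - (u,v)$ separates $u$ from $v$. I would proceed by strong induction on $n$.

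For the inductive step, pick any bad edge $e = (u_1, u_2)$ and its separator $S$, and write $V = A \sqcup B \sqcup S$ with $u_1 \in A$, $u_2 \in B$, and $e$ the only edge between $A$ and $B$. Let $G_A = G[A \cup S]$ and $G_B = G[B \cup S]$. A first short step establishes a ``heredity'' property: every edge of $G_A$ is still bad in $G_A$, and similarly for $G_B$ and for $G - v$ for any deleted vertex $v$. The argument is that intersecting a $G$-separator with the relevant vertex set still separates inside the subgraph, because any connecting path that survives in the subgraph also survives in $G$ with the original separator removed.

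In the \emph{balanced case} where both $|V(G_A)| > 3\gamma$ and $|V(G_B)| > 3\gamma$, I apply induction to both sides and combine via $|E| = |E(G_A)| + |E(G_B)| - |E(G[S])| + 1$ together with $|S| \leq \gamma$; this yields $|E| \leq 2.5\gamma(n-\gamma) - 3$, strictly beating the target. In the \emph{unbalanced case} (without loss of generality $|V(G_A)| \leq 3\gamma$), I split again on the minimum degree of $G$. If some vertex $v$ has $\deg(v) \leq 2.5\gamma$, I delete it and apply induction to $G - v$; the $2.5\gamma$ edge loss is exactly absorbed by the slope of the target bound in $n$. Otherwise every vertex has degree at least $2.5\gamma + 1$, so $u_1$ has at least $1.5\gamma$ neighbors inside $A$ (at most $|S| \leq \gamma$ lie in $S$ and none in $B$ other than $u_2$), and hence $|A| \geq 1.5\gamma + 1$. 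When $|V(G_B)| > 3\gamma$, pairing the trivial bound $|E(G_A)| \leq \binom{|A|+|S|}{2}$ with the inductive bound on $G_B$ reduces the claim to $(|A|+|S|)(|A|+|S|-1) \leq 5\gamma|A| - 2$, which follows algebraically from $|A| \geq 1.5\gamma+1$, $|S| \leq \gamma$, and $|A| + |S| \leq 3\gamma$. When $|V(G_B)| \leq 3\gamma$ as well, which forces $n \leq 6\gamma$, I use trivial bounds on both sides and verify directly that their sum stays below the target.

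The main obstacle I anticipate is the bookkeeping in this high-minimum-degree unbalanced case: both the algebraic verification of the key quadratic inequality and the boundary regime where $|V(G_A)|$ and $|V(G_B)|$ can simultaneously fail to exceed $3\gamma$. In that tight regime the trivial bound $|E| \leq \binom{n}{2}$ suffices once $n$ exceeds roughly $(5+\sqrt{5})\gamma/2$; for smaller $n$ the hypothesis $|E| \geq 2.5\gamma(n-\gamma) - 1$ itself becomes vacuous, so the lemma holds with nothing to prove.
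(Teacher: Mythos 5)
Your proposal is correct, and it shares the paper's central engine: prove the contrapositive (or take a minimal counterexample), pick any edge with its $\leq\gamma$-vertex separator $S$, decompose $V$ into $A\sqcup B\sqcup S$ with $(u_1,u_2)$ the unique $A$--$B$ bridge, note that badness is hereditary in $G[A\cup S]$ and $G[B\cup S]$, and count edges across the decomposition. Where you diverge is in how the recursion is closed. The paper pads $S$ to exactly $\gamma$ vertices, observes $n_1+n_2=n+\gamma$, and then argues that one of $G_1,G_2$ must itself have at least $2.5\gamma(n'-\gamma)-1$ edges \emph{and} at least $2\gamma+1$ vertices; this is finished in two short cases ($n'>3\gamma$ contradicts minimality, $2\gamma<n'\le 3\gamma$ contradicts $\binom{n'}{2}<2.5\gamma(n'-\gamma)-1$), with the only delicate step being a counting argument showing that a dense $G_2$ with $n_2\le 2\gamma$ would force $G_1$ to be dense too. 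You instead organize the induction around whether both, one, or neither of $G_A,G_B$ clears the $3\gamma$ threshold, and in the unbalanced regime you introduce a minimum-degree dichotomy (delete a low-degree vertex, or use $\deg\geq 2.5\gamma+1$ to force $|A|\geq 1.5\gamma+1$) that has no analogue in the paper. I checked the two quadratic verifications you flag as the crux --- $(|A|+|S|)(|A|+|S|-1)\le 5\gamma|A|-2$ under your side constraints, and the trivial-bounds case when both sides are $\le 3\gamma$ --- and both hold for $\gamma\geq 1$; your base case $n=3\gamma+1$ is vacuous as you say. So the proof goes through. The trade-off is that the paper's "one piece has $>2\gamma$ vertices" trick eliminates the need for the degree case-split and the vertex-deletion step entirely, giving a noticeably shorter argument, while your version is more modular and uses only very local edge-counting in each branch. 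Two small presentational cautions: (i) when you delete a vertex you need $n\ge 3\gamma+2$ so the induction applies --- you do cover $n=3\gamma+1$ as a vacuous base case, but this dependency should be stated before the deletion step, not after; and (ii) the constraints $|A|\geq 1.5\gamma+1$ (and symmetrically for $B$) come from the high-min-degree assumption, so the ``trivial bounds on both sides'' case implicitly uses that assumption too --- worth making explicit, since the trivial bound alone does not suffice for, say, $|B|=1$.
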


Second, as described in Section~\ref{sec:model}, the number of vertices that are
killed in any attack is the size of the connected component in $G[p]$ that 
contains the initially attacked vertex. 
Lemma~\ref{lem:logc} (which we will prove later) bounds the size of a randomly 
chosen connected component in $G[p]$.
\begin{lemma} \label{lem:logc} 
Let $G=(V,E)$ be an equilibrium network over $n$ vertices
with $|E|=kn$ and $\max\{1/\c, 24000\}\le k = O(\log n)$. 
When $p<1/n$ and $n$ is sufficiently large,
the size of the connected component of a randomly chosen 
vertex $v$ in $G[p]$ is at most 
$k/3$ with probability at least $1-2\c/(3n)$.
\end{lemma}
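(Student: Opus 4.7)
The plan is to bound the probability that the component of a fixed vertex in $G[p]$ is large by dominating the exploration with a subcritical branching process. The key observation is that $p=\kappa/n$ with $\kappa<1$ puts us in the subcritical percolation regime on $G$, where typical components have size $O(1)$ and large deviations decay exponentially in the component size. The threshold $k/3$ is then comfortably past the ``bulk'' of the component distribution, and the remaining work is purely a tail estimate.

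Concretely, I would first couple $G[p]$ with an Erd\H{o}s--R\'enyi random graph by independently retaining each edge of $K_n$ with probability $p$: since $G\subseteq K_n$, this coupling gives $G[p]\subseteq K_n[p]\stackrel{d}{=}G(n,\kappa/n)$, so pointwise $|C_G(v)|\le |C_{G(n,\kappa/n)}(v)|$ for every $v\in V$. I would then invoke the classical Galton--Watson domination for subcritical $G(n,\kappa/n)$: for any fixed $v$, the component size is stochastically dominated by the total progeny $Z$ of a Galton--Watson branching process with $\mathrm{Bin}(n-1,p)$ offspring, which is subcritical with mean strictly less than $\kappa<1$. Standard tail estimates (e.g.\ Otter--Dwass applied to the Borel distribution, or a direct Chernoff bound on the total progeny) yield $\Pr[Z\ge s]\le C\,e^{-\alpha s}$ for constants $C$ and $\alpha=\alpha(\kappa)>0$. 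Plugging in $s=\lceil k/3\rceil$, the same bound holds uniformly over $v$ and therefore also under a uniform random choice of $v$.

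Finally, I would verify quantitatively that $C e^{-\alpha k/3}\le 2\c/(3n)$ for sufficiently large $n$: the lower bound $k\ge\max\{1/\c,24000\}$ together with the upper bound $k=O(\log n)$ is calibrated so that $k/3$ is simultaneously large enough as a constant to absorb $C$, $\c$ and the logarithmic factors and to dominate $(\log n)/\alpha$ in the relevant regime. The main obstacle is precisely this quantitative step, because the decay rate $\alpha(\kappa)$ degrades like $\Theta((1-\kappa)^2)$ as $\kappa\to 1^-$, so the naive coupling to the complete graph can be lossy. A sharper variant that uses the sparsity of $G$ directly --- via the union bound $\Pr[|C_G(v)|\ge s]\le N_s(G,v)\,p^{s-1}$, where $N_s(G,v)$ counts subtrees of size $s$ rooted at $v$, combined with a combinatorial estimate on $N_s(G,v)$ in a graph with only $kn$ edges from extremal graph theory --- avoids the loss and is likely what is needed to match the exact constants in the lemma statement.
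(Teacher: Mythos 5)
Your primary approach has a genuine gap that makes it fail in exactly the regime the lemma targets. Coupling $G[p] \subseteq K_n[p] \overset{d}{=} G(n,\kappa/n)$ and invoking exponential tails for subcritical Galton--Watson yields $\Pr[|C(v)| \geq k/3] \le C e^{-\alpha k/3}$ with $C,\alpha$ depending only on $\kappa$. But the hypothesis permits $k$ to be a constant --- e.g.\ $k = 24000$ --- in which case $C e^{-\alpha k/3}$ is a \emph{fixed constant}, while the target $2\c/(3n)$ tends to zero. So the verification step ``$k/3$ dominates $(\log n)/\alpha$'' is exactly what breaks: it would require $k = \Omega(\log n)$, but the lemma only assumes $k \ge \max\{1/\c, 24000\}$. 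The complete-graph coupling is not merely ``lossy in constants'' as you suggest; it loses an entire factor of $n$ in the target probability, because it ignores that most vertices of a $kn$-edge graph have degree $O(k)$ and thus are isolated in $G[p]$ with probability $1-O(k/n)$. You flag this and sketch a fix via the union bound $\Pr[|C(v)| \ge s] \le N_s(G,v)\,p^{s-1}$, but that fix is not carried out, and it is not straightforward: a small number of very high-degree vertices can make $N_s(G,v)$ enormous, so a naive bound on the number of rooted subtrees does not close the argument.

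The paper's actual proof confronts precisely this difficulty by averaging over the random choice of $v$ rather than seeking a uniform per-vertex bound, and by partitioning $V$ into high-degree ($\ge n^{3/4}$) and low-degree vertices. Only $\tilde O(n^{1/4})$ vertices are high-degree, so they contribute negligibly to the average; for the rest, a three-stage edge-sampling decomposition plus a Galton--Watson tail bound (Corollary~\ref{cor:gw}) controls component sizes separately in the $H$--$H$, $L$--$L$, and $H$--$L$ edge classes, and a second moment estimate (Lemma~\ref{lem:degdist}) bounds the expected number of vertices with many retained edges. It is this degree-stratified accounting --- not a tighter tree-counting estimate --- that delivers the $O(1/n)$ probability for constant $k$, and it is the ingredient missing from your proposal.
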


We now give the formal proof of Theorem~\ref{thm:smallp}.
\begin{proof}[Proof of Theorem~\ref{thm:smallp}]
Assume by way of contradiction that $G$ is an equilibrium network with 
$kn$ edges where $k> \max\{1/\c,24000\}$. 
Let $\gamma = \lfloor k/2.5 \rfloor$, we have $kn>2.5\gamma(n-\gamma)-1$.
By Lemma~\ref{lem:cut2}, there exists an edge $(u,v)$ such that in order to
make $(u,v)$ the only remaining path connecting $u$ and $v$, we need to delete 
at least $\gamma+1$ vertices. Without loss of generality assume that 
$u$ has purchased the edge $(u,v)$. If in any attack at most $\gamma$
vertices are killed, $u$ will not lose any connectivity benefit 
after dropping this edge but decrease her expenditure by $\c$.

Consider the size of the largest connected component in $G[p]$. Since $\kappa<1$,
the size of the largest connected component in $G[p]$ is at most $\beta \log n$ with probability $1-o(1/n)$
for sufficiently large constant $\beta$ which only depends on $\kappa$.  When 
$G$ is the complete graph, 
$G[p]$ corresponds to the random graph generated by the 
Erd{\"{o}}s-R\'enyi model. In such case, the size of the largest
component of $G[p]$ is $O(\log n)$, with high probability, when 
$p n = \kappa <1$ \cite{erdos1960evolution}.

If $\gamma>\beta \log n$, then with probability at most $o(1/n)$,
the attack kills more than $\gamma$ vertices, in which case the connectivity 
benefit of $u$ can decrease by at most $n$ after
dropping the edge $(u,v)$.
So the expected connectivity benefit of $u$ decreases by at most $o(1)$ 
after the deviation but her expenditure also decreases by $\c$. 
Hence, after the deviation, the expected change in the utility of $u$ is at least $\c - o(1)>0$ 
which contradicts the assumption that $G$ is an equilibrium network
(recall that we have assumed $\c$ is a constant independent of $n$).

If $\gamma \le \beta \log n$, then by definition, 
$k = O(\log n)$. Since $k$ is also at least $\max \{ 24000,1/\c \}$,
by Lemma~\ref{lem:logc}, with probability at least $1-2\c/(3n)$,
the attack kills at most $k/3<\gamma$ vertices, in which case the connectivity
benefit of $u$ remains unchanged after dropping the edge $(u,v)$.
With probability at most $2\c/(3n)$, more than $k/3$ vertices are killed
in which case the connectivity benefit of $u$ can decrease by at most $n$.
So the expected connectivity benefit of $u$ decreases by at most $2\c/3$ after
the deviation but her expenditure also decreases by $\c$. 
Hence, after the deviation, the expected change in the utility of $u$ is at least $\c/3>0$, 
which contradicts the assumption that $G$ is an equilibrium network.
\end{proof}

We now proceed to prove Lemmas~\ref{lem:cut2}~and~\ref{lem:logc}.

\begin{proof} [Proof of Lemma~\ref{lem:cut2}]
    Assume by way of contradiction that $G$ is the graph with 
    smallest number of vertices such that
    $G$ has $n>3\gamma$ vertices and at least $2.5\gamma(n-\gamma)-1$
    edges. 
    Therefore, $G$ has two vertices $v_1$ and $v_2\in V$ such that $(v_1, v_2)\in E$
    and there exists a vertex set $S\subset V$ with at most $\gamma$ vertices, such that 
    after deleting $S$ the only
    path from $v_1$ to $v_2$ in $G$ is through the edge $(v_1, v_2)$.
    If $S$ has less than $\gamma$ vertices, then we add arbitrary vertices from $V$ 
    (but not $v_1$ or $v_2$) to $S$ so that $S$ has exactly $\gamma$ vertices 
    (we can always do so because $G$ has more than $3\gamma$ vertices).

    Consider the graph where the edge $(v_1, v_2)$ and the vertices in $S$ are removed,
    $v_1$ and $v_2$ are not connected in this graph. Let $C_1$ be the connected component
    that contains $v_1$ and $C_2 = V \setminus S \setminus C_1$. By definition,
    $(v_1,v_2)$ is the only edge between $C_1$ and $C_2$ in $G$.

    Define two graphs $G_1 = (V_1, E_1)$ and $G_2 = (V_2, E_2)$ as subgraphs of $G$
    induced by $C_1 \cup S$ 
    and $C_2 \cup S$. Suppose $G_1$ has $n_1$ 
    vertices and $G_2$ has $n_2$ vertices where $\gamma+1\leq n_1\leq n-1$ and $\gamma+1\leq n_2\leq n-1$. 
    Also without loss of generality assume $n_1\geq n_2$. 
    We have that $n_1+n_2 = n+ \gamma$. 
    On the other hand, $G_1$ and $G_2$ have at least $2.5\gamma(n_1+n_2-2\gamma)-2$ 
    edges in total (any edge which is not ($v_1$,$v_2$) is either in $G_1$ or $G_2$). 
    So either $G_1$ has at least $2.5\gamma(n_1-\gamma)-1$ edges or $G_2$ has at least $2.5\gamma(n_2-\gamma)-1$ edges. 
    Also by the property of $G$, 
    for any pair of vertices $v_1', v'_2\in V_1$ (or $v_1', v'_2\in V_2$) such that $(v'_1, v'_2)\in E_1$
    (or $(v'_1, v'_2)\in E_2$) we only need to delete 
    at most $\gamma$ vertices so that the only
    path from $v'_1$ to $v'_2$ is through the direct edge $(v'_1, v'_2)$.

    We claim that there exists a graph $G'$ (which is either $G_1$ or $G_2$) with $n'>2\gamma$ 
    vertices that has at least $2.5\gamma(n'-\gamma)-1$ edges. 
    Note that if $G_1$ has at least $2.5\gamma(n_1-\gamma)-1$ edges then we are done
    since $n_1 \geq n_2$ and $n_1+n_2 > 4 \gamma$ imply that $n_1 > 2\gamma$.
    So suppose 
    $G_1$ has less than $2.5\gamma(n_1-\gamma)-1$ edges. Therefore,
    $G_2$ has at least $2.5\gamma(n_2-\gamma)-1$ edges. Again if $n_2 > 2\gamma$
    we are done so suppose $n_2\leq 2\gamma$.
    Consider the edges which are in $G_2$ but not in $G_1$. 
    These edge have at least one endpoint in $C_2$, so there are at most 
    \begin{align*}
        \frac{(n_2-\gamma)(n_2-\gamma-1)}{2}+\gamma(n_2-\gamma)  &< (n_2-\gamma)(\frac{n_2}{2}+\frac{\gamma}{2})\\
                                                     &\leq \frac{3\gamma}{2}(n_2-\gamma) \\
                                                     & \leq 2.5\gamma(n_2-\gamma)  - 1
    \end{align*}
    such edges. But this would imply 
    $G_1$ has at least $2.5\gamma(n_1-\gamma)-1$. 

    If $G'$ has strictly more than $3\gamma$ vertices, it contradicts our assumption that $G$ is 
    the smallest graph with the property stated in the lemma. If $G'$ has at most $3\gamma$ vertices, 
    then $G'$ has 
    at most 
    \begin{align*}
        \frac{n'(n'-1)}{2} &= \frac{(n'-2\gamma)(n'+2\gamma)}{2} +2\gamma^2 - \frac{n'}{2} \\
                           &\le2.5 \gamma(n'-2\gamma) + 2 \gamma^2  - \frac{n'}{2}\\
                           & < 2.5 \gamma (n'-\gamma)-1
    \end{align*}
    edges, which is a contradiction.  
\end{proof}

Before proving Lemma~\ref{lem:logc},    
let us introduce some notation. Let $H\subseteq V$ 
be the set of vertices in $G$ with degree at least $n^{3/4}$. Also let $L=V\setminus H$ be the 
set of  vertices in $G$ with degree strictly less than $n^{3/4}$. Hence, $H$ and $L$ correspond
to vertices with \emph{high} and \emph{low} degrees in $G$, respectively.
Recall that $G[p]$ is a random graph where each edge of $G$ is sampled independently  
to be retained in $G[p]$ with probability $p$.
Using $H$ and $L$, the creation of $G[p]$ can be describe as a three step sampling process. 
In the first step, edges with both endpoints in $H$ are sampled to be retained. 
In the second step, edges with both endpoints in $L$ are sampled to be retained.
Finally, in the third step, edges with one endpoint in $H$ and the other endpoint in $L$
are sampled to be retained.

In Lemma~\ref{lem:higd}, we first show 
that with high probability the size of the largest connected component created 
by the first step and second step of the sampling process is at most 5 and 12, respectively. 
These connected components can then be connected together in the third step of the sampling
to create larger connected components in $G[p]$.
We show that with high probability, the third step would not connect more than 
3 of the connected components of high degree vertices (which were created in the first step).
This implies that the number 
of high degree vertices in any connected component of $G[p]$ is at most $15$ with 
high probability.

We then show in Lemma~\ref{lem:degdist} that for any $\alpha>0.001$, the expected 
number of vertices with at least $\alpha k$ edges 
in $G[p]$ is at most $\c/(2^{1200\alpha}k)$. We then use the structural results of
Lemma~\ref{lem:higd} and~\ref{lem:degdist} to show that 
with probability at least $1-2\c/(3n)$, the size of the connected component of a 
randomly chosen vertex in $G[p]$ is at most $k/3$ .

\begin{lemma} \label{lem:higd}
    Let $G=(V,E)$ be an equilibrium network over $n$ vertices
    with $|E|=kn$ with $k \ge 24000$ and $k=O(\log n)$.
    Suppose $p<1/n$ and $n$
    is sufficiently large. Then, with probability at least $1-o(1/n)$,  (1) the connected
    components generated in the first step and the second step of the sampling process
    of creating $G[p]$ have size
    at most $5$ and $12$, respectively and (2)
    no component in $G[p]$ has more than 15 vertices from the set $H$.
\end{lemma}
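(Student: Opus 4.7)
The plan is to establish that $|H|$ is small and then handle the three sampling steps in turn. From $\sum_v \deg_G(v) = 2|E| = 2kn$ together with $\deg_G(v) \ge n^{3/4}$ for each $v \in H$, I get $|H| \le 2k n^{1/4}$. Consequently the induced subgraph $G[H]$ has maximum degree at most $|H|-1 = O(k n^{1/4})$, while $G[L]$ has maximum degree strictly less than $n^{3/4}$.

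For assertion (1), I use a branching-process style tree-counting argument: the number of subtrees of size $s$ rooted at a fixed vertex in a graph of maximum degree $\Delta$ is at most $(e\Delta)^{s-1}$, and each such subtree survives the edge-sampling with probability $p^{s-1}$. Thus the probability that a fixed $v \in H$ lies in a step-1 component of size at least $6$ is at most $(e|H|p)^5 \le (2ek/n^{3/4})^5$; a union bound over $v \in H$ gives an overall probability $O(k^6/n^{7/2}) = o(1/n)$ since $k = O(\log n)$. An analogous calculation shows that the probability a fixed $v \in L$ lies in a step-2 component of size at least $13$ is at most $(en^{3/4}p)^{12} \le e^{12}/n^3$, and a union bound over $|L| \le n$ yields $o(1/n)$.

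For assertion (2), I condition on the success of (1). I build a bipartite meta-graph whose vertices are the step-1 H-super-nodes (each holding at most $5$ H-vertices) and the step-2 L-super-nodes (each holding at most $12$ L-vertices), placing a meta-edge between $A$ and $B$ whenever at least one step-3 edge between them is sampled. A $G[p]$-component containing more than $15$ H-vertices forces a meta-component containing at least $4$ H-super-nodes. The cheapest way to form such a meta-component is via a single L-super-node $B$ with sampled step-3 edges to four distinct H-super-nodes; using $d_B \le 12 n^{3/4}$ (the potential step-3 degree of $B$) and $\sum_B d_B \le kn$ (total H-L edges in $G$), the expected number of such L-super-nodes is at most $\sum_B \binom{d_B}{4} p^4 \le (12 n^{3/4})^3 \cdot kn \cdot p^4 / 24 = O(k/n^{3/4})$. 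Longer chains through several L-super-nodes are even more expensive in sampled step-3 edges.

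The main obstacle is sharpening these bounds to the required $o(1/n)$, since the naive counts above give only $O(k/n^{3/4})$. I would close the gap by combining two ingredients: (i) a Chernoff bound on the total number of sampled step-3 edges, which is $O(k)$ with probability $1 - n^{-\Omega(k)}$ because $k \ge 24000$, severely restricting the set of realizable bridging structures; and (ii) a structural enumeration showing that each additional H-super-node absorbed into a meta-component forces at least two additional sampled step-3 edges, so the exponent in $p$ grows quickly enough to dominate the combinatorial count of distinct structures once conditioned on (i).
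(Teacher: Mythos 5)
For assertion (1), your tree-counting argument is a valid and self-contained alternative to the paper's route (the paper invokes a Galton--Watson tail bound, Corollary~\ref{cor:gw}, applied to the exploration process in $G[H]$ and $G[L]$). Your bound $(e\Delta)^{s-1}$ on rooted subtrees of size $s$, combined with the $p^{s-1}$ survival probability, yields exactly the same conclusion, and the numerics $(2ek/n^{3/4})^5$ and $(e/n^{1/4})^{12}$ are correct. Either argument works; yours is more elementary, while the paper's Galton--Watson lemma is stated once and reused cleanly for both steps.

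For assertion (2), there is a genuine gap, and you have identified it yourself, but the fix you sketch is the wrong one. The problem is not that the argument needs a Chernoff bound on the total number of sampled step-3 edges plus a delicate structural enumeration; the problem is that you used the wrong bound on $d_B$. You took $d_B \le 12 n^{3/4}$, the degree-based bound on a size-$\le 12$ set of low-degree vertices. But $d_B$ is the \emph{step-3} degree of $B$, i.e., the number of $G$-edges from $B$ to $H$, and you already established $|H| \le 2kn^{1/4}$. Since each vertex of $B$ has at most one $G$-edge to any fixed $h\in H$, the correct bound is $d_B \le 12\,|H| = O(kn^{1/4})$, which is far smaller than $12n^{3/4}$. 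Plugging this in gives
$\sum_B \binom{d_B}{4}p^4 \le \tfrac{1}{24}\bigl(24kn^{1/4}\bigr)^3 \cdot kn \cdot n^{-4} = O\bigl(k^4 n^{-9/4}\bigr) = o(n^{-2}),$
which is small enough to union-bound against the (few) remaining bridging configurations with several L-super-nodes, and no Chernoff machinery is needed. The paper sidesteps the whole tree-structure enumeration by bounding, for each $C_2 \in \mathcal{C}_2$ individually, the events ``$C_2$ connects by step-3 edges to $\ge 3$ distinct $H$-vertices'' (probability $\binom{|H|}{3}(12p)^3 = o(n^{-2})$, so $o(n^{-1})$ after a union bound over components) and ``three or more $C_2$'s each connect to $\ge 2$ $H$-vertices'' (probability $\binom{n}{3}\tilde O(n^{-3/2})^3 = o(n^{-1})$); these two facts immediately cap every L-super-node's meta-degree at $2$ and limit the number of degree-$2$ L-super-nodes to two, giving at most $3\cdot 5 = 15$ $H$-vertices per $G[p]$-component. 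That decomposition is cleaner than reasoning directly about meta-components, which is part of why your bound got stuck. I would recommend either switching to the per-$C_2$ bounds, or, if you keep the meta-graph framing, replacing $d_B \le 12n^{3/4}$ with $d_B \le 12|H|$ and then arguing explicitly that any bipartite meta-tree with $4$ $H$-super-nodes and $b$ L-super-nodes requires $\ge 3+b$ sampled step-3 edges, so the dominant term is $b=1$.
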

\begin{proof}
    Recall that in the first two 
    steps of creating the $G[p]$, we sample 
    edges to retain  independently with probability 
    $p$ from the graphs induced 
    by only $H$ and $L$, respectively. Since 
    $|E|=kn$ the number of high 
    degree vertices is at most $\tilde{O}(n^{1/4})$ 
    (where the notation $\tilde{O}$
    hides logarithmic dependencies on $n$). 
    So any vertex in the graph induced 
    by $H$ has degree at most 
    $\tilde{O}(n^{1/4})$. Moreover, by definition, the 
    vertices in the graph induced by $L$ 
    have degree at most $n^{3/4}$. By 
    Corollary~\ref{cor:gw} (in Appendix~\ref{sec:useful-lem}), with high probability, the random 
    components formed by the first step and 
    second step of the sampling process have size at most 5 and 12, respectively.
    We refer to the set of the components of $G[p]$ that are 
    formed by step one and two of the sampling 
    by $\mathcal{C}_1$ and $\mathcal{C}_2$, respectively.

    Consider a component $C_2\in \mathcal{C}_2$. The probability 
    that there is an edge in $G$ between any vertex in $C_2$
    and a specific high degree vertex is bounded by $12/n$. 
    This means that the probability that the 
    vertices in $C_2$ are connected to more than one  
    high degree vertex is at most 
    $\tbinom{|H|}{2}(12/n)^2 = \tilde{O}(n^{-1.5})$.
    Similarly the probability that the 
    vertices in $C_2$ are connected to more than two
    high degree vertices is at most 
    $\tbinom{|H|}{3} (12/n)^3 = o(n^{-2})$. 
    Therefore, with high probability, there is no 
    component $C_2\in \mathcal{C}_2$ that is connected to three high degree vertices.
    Moreover, the probability that there are three connected components
    in $\mathcal{C}_2$ that is connected to $2$ high degree vertices
    is at most $\tbinom{n}{3} (\tilde{O}(n^{-1.5}))^3 = o(n^{-1})$.

    In the third step of creating the $G[p]$, components 
    from $\mathcal{C}_1$ and $\mathcal{C}_2$ would become connected
    by sampling the edges in between $H$ and $L$. 
    As we showed there are
    most two components in $\mathcal{C}_1$ that will be 
    connected in the $G[p]$ by the edges sampled
    in the third step.
    This means, with high probability, no component in 
    $G[p]$ will include more than 
    3 components from $\mathcal{C}_1$; so, with high probability, 
    no component in $G[p]$ has more than 15 high degree vertices. 
\end{proof}

\begin{lemma} \label{lem:degdist}
    Let $G=(V,E)$ be an equilibrium network over $n$ vertices
    with $|E|=kn$ and $k\ge \max\{24000,1/\c\}$.
    For any $\alpha>0.001$, when $p<1/n$, the 
    expected number of vertices 
    that have at least $\alpha k$ adjacent edges in 
    $G[p]$ is at most $\c/(2^{1200\alpha}k)$
    for sufficiently large $n$. 
\end{lemma}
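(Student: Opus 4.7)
The plan is to bound the expected count by applying a Chernoff tail bound to each vertex's degree in $G[p]$, then summing over all vertices and using only the total-edge identity $\sum_v d_v = 2kn$ together with $p < 1/n$; the equilibrium hypothesis enters the argument only through the lower bound $k \ge 1/\c$. Fix any vertex $v$ with $d_v$ neighbors in $G$. The number of $v$'s incident edges surviving in $G[p]$ is distributed as $\mathrm{Bin}(d_v, p)$, with mean $d_v p < d_v/n < 1 \le \alpha k$. The standard Chernoff upper-tail bound (obtained by optimizing the Markov bound on $e^{tX}$ at $t = \log(\alpha k/(d_v p))$) yields
\[
\Pr\!\left[\deg_{G[p]}(v) \ge \alpha k\right] \;\le\; \left(\frac{e\, d_v p}{\alpha k}\right)^{\!\alpha k} \;\le\; \left(\frac{e\, d_v}{\alpha k\, n}\right)^{\!\alpha k},
\]
where the last step uses $p < 1/n$.

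Summing over all $v$, I would next apply the elementary estimate $\sum_v d_v^{\alpha k} \le (n-1)^{\alpha k - 1}\sum_v d_v \le 2k n^{\alpha k}$, valid because $\alpha k \ge 0.001 \cdot 24000 = 24 \ge 1$. The $n$-dependence then cancels, and the expected count is bounded by
\[
\sum_v \Pr\!\left[\deg_{G[p]}(v) \ge \alpha k\right] \;\le\; \left(\frac{e}{\alpha k\, n}\right)^{\!\alpha k} \cdot 2k n^{\alpha k} \;=\; 2k \left(\frac{e}{\alpha k}\right)^{\!\alpha k}.
\]
It remains to show $2k(e/(\alpha k))^{\alpha k} \le \c/(2^{1200\alpha}k)$. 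Taking $\log_2$ and rearranging reduces this to $\alpha k\log_2(\alpha k/e) \ge 1 + 1200\alpha + 2\log_2 k + \log_2(1/\c)$. The hypothesis $k \ge 1/\c$ absorbs $\log_2(1/\c)$ into $\log_2 k$, and $\alpha k \ge 24$ gives $\log_2(\alpha k/e) \ge 3$, leaving the single inequality $\alpha(3k - 1200) \ge 1 + 3\log_2 k$. At the boundary $(\alpha, k) = (0.001, 24000)$ the left side is $70.8$ and the right side is about $44.65$, with ample slack; for larger $\alpha$ or larger $k$ the left side grows at least linearly while the right grows only logarithmically, so the inequality persists.

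The only real obstacle is bookkeeping: choosing the form of the Chernoff bound cleanly enough that the constants $24000$ in the degree hypothesis and $2^{1200\alpha}$ in the target actually line up. Notably, equilibrium plays no role in the tail bound itself and is invoked solely through $k \ge 1/\c$, which is what enables absorbing the $\log_2(1/\c)$ term in the final constant check.
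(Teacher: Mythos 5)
Your proof is correct and follows essentially the same route as the paper: both start from the same tail bound $\left(\frac{e d_v p}{\alpha k}\right)^{\alpha k}$ (the paper derives it as $\binom{d_v}{\alpha k}p^{\alpha k}$ plus Stirling, you derive it as a Chernoff bound), both reduce the sum over vertices to $\sum_v d_v = 2kn$ by peeling off a single factor of $d_v/n$, and both finish with a constant check exploiting $\alpha k \ge 24$, $k \ge 24000$, and $k \ge 1/\c$. The only cosmetic difference is that the paper carries out the constant check as a chain of inline inequalities while you consolidate it into the single logarithmic inequality $\alpha(3k-1200) \ge 1 + 3\log_2 k$; both are correct.
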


\begin{proof}
    For a vertex with degree $d>\alpha k$, the probability that
    she has $\alpha k$ edges in $G[p]$ is 
    at most
    \begin{align*} 
        \tbinom{d}{\alpha k} n^{-\alpha k} 
        &< (\frac{de}{n\alpha k})^{\alpha k} 
        < \frac{d}{n} (\frac{e}{\alpha k})^{\alpha k}\\
        &< \frac{d}{n} \cdot 8^{-\alpha k}
        \le \frac{d}{n}  \cdot 2^{-2\alpha k} \cdot 2^{-24000\alpha}\\
        &< \frac{d}{n} \cdot  k^{-3} \cdot 2^{-24000\alpha}
        < \frac{d}{n} \cdot \frac{\c}{2k^2} 2^{-1200\alpha}.
    \end{align*}
    The first inequality is due to Stirling's formula. 
    Other inequalities are due to 
    $k\alpha\ge 24$, $k\ge \max\{24000,1/\c\}$ and $e/(\alpha k) < 1/8$.
    Adding up the probabilities for all the vertices 
    (using linearity of expectation) 
    and using the fact that the sum of 
    the degrees of all the vertices is $2kn$ will 
    conclude the proof.
\end{proof}

We now have all the background to prove 
Lemma~\ref{lem:logc}.

\noindent\textit{Proof of Lemma \ref{lem:logc}.}
If we randomly choose a vertex $v$, the probability 
that the connected 
component of $v$ in $G[p]$ has size at least $k/3$
is upper bounded by $1/n$ times the sum of the sizes 
of components 
with size at least $k/3$ in $G[p]$. 
This is in turn upper bounded by 
$$\frac{1}{n}\sum_{i=1}^n (i+1) \frac{k}{3} (x_i-x_{i+1}) 
< \frac{1}{n}\sum_{i=1}^n k x_i,$$ 
where $x_i$ is the number of components with size at 
least $ik/3$.

Recall that we partitioned  the vertices of $G$ into high 
and low degree vertex sets $H$
and $L$ based on the degree. We described a 
three step sampling process for creating $G[p]$
and referred to the set of connected components of $G[p]$ that are 
formed by step one and two of the sampling 
by $\mathcal{C}_1$ and  $\mathcal{C}_2$, respectively.
Let $\xi$ be the event such that each $C_2\in \mathcal{C}_2$ 
has at most 12 vertices 
and each component in $G[p]$ has at most 
$15$ high degree vertices. 
By Lemma~\ref{lem:higd}, $\Pr[\bar{\xi}] = 
1-\Pr[\xi]=o(1/n)$. Let $y_i$ be the number of vertices 
with at least $ik/600$ edges in $G[p]$. By 
Lemma~\ref{lem:degdist}, $\E[y_i]\leq \c/(2^{2i} k) \le \c/(2^{i+1}k)$.

Fix a component of $G[p]$.
Conditioned on event $\xi$, if all the high degree vertices in the component have 
degree at most $i k/600$, then each high degree vertex
is connected to at most $ik/600$ of the components in $\mathcal{C}_2$. 
So the size of this component is at most $15 \cdot 12 \cdot (ik/600) + 15 = 3ik/10 + 15 < (ik/3)$. 
Therefore, each component with size at least $ik/3$ contains at least one 
vertex with at least  $i k/600$ adjacent edges in $G[p]$. 
This means that $\E[x_i|\xi] \le \E[y_i|\xi]$. So
\begin{align*}
    \E[x_i|\xi] &\le \E[y_i|\xi] 
    = \frac{\E[y_i]- \E[y_i|\bar{\xi}] \Pr[\bar{\xi}]}{\Pr[\xi]}\\
    &\le \frac{\E[y_i]}{\Pr[\xi]} 
    \le \frac{\c}{(2^{i+1} k)(1-o(1/n))} \\
    &= \frac{\c}{2^{i+1}k} \cdot (1+o(1/n)).
\end{align*}

So the probability that $v$ is in a component of size at least $k/3$ given $\xi$ is 
\begin{align*}
    \frac{1}{n}\E[\sum_{i=1}^n k x_i|\xi] 
    &\le \frac{1}{n}\sum_{i=1}^n k \E[x_i| \xi] \\
    &= (1+o(1/n))\frac{1}{n}\sum_{i=1}^n \frac{\c}{2^{i+1}}\\
    &= (1+o(1/n))\frac{\c}{2n}.
\end{align*}

So the overall probability that a vertex is in a component with 
size at least $k/3$ is at most $(1+o(1/n))\c/(2n)+\Pr[\bar{\xi}]<2\c/(3n)$.
\qed

\section{Social Welfare}
\label{sec:welfare}

In this section we provide a lower bound on the social welfare of equilibrium networks.
Similar to other reachability games, the empty graph can form in 
equilibrium~\cite{BalaG00, GoyalJKKM16}. 
Hence without any further assumptions, 
no meaningful guarantee on the social welfare can be made. Hence, we focus on non-trivial 
equilibrium networks defined as follows.
\begin{definition}
\label{def:non-trivial}
An equilibrium network is non-trivial if it contains at least one edge.
\end{definition}

Definition~\ref{def:non-trivial} rules out the empty network but it is still possible that
a non-trivial equilibrium network contains many small connected components or 
becomes highly fragmented after the attack. In this section we show that none
of these concerns materialize. In particular, in 
Theorem~\ref{lem:largest}, we first show  that 
any non-trivial equilibrium network contains at least one large connected
component. We then show in Lemma~\ref{lem:small-components} 
that when the network is not too dense,
the equilibrium network cannot become highly fragmented after the attack.
These two observations allows us to prove our social welfare
lower bound as stated in Theorem~\ref{thm:welfare}.

We start by showing that any non-trivial equilibrium network contains a 
large connected component. 
\begin{thm} \label{lem:largest}
Let $G=(V,E)$ be a non-trivial equilibrium network over $n$ vertices.
Then, for sufficiently large $n$, the largest connected component of $G$ has at least 
$n/3$ vertices.
\end{thm}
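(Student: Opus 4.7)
I argue by contradiction: suppose every connected component of $G$ has size strictly less than $n/3$, and let $C_1$ be a largest component, of size $c_1 < n/3$. The strategy is to find a vertex $v \notin C_1$ (more than $2n/3$ vertices lie outside $C_1$) such that adding a single edge $(v,w)$ to some $w \in C_1$ is a strictly profitable deviation for $v$, contradicting equilibrium.

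A first observation controls $c_1$ via non-triviality. Since $G$ has an edge, some vertex $u$ purchased an edge, so by equilibrium the marginal benefit of that edge to $u$ is at least $\c$. As $\CC_u \le c_u(n-1)/n$ (the case $a=u$ contributes $0$), this forces $c_u > \c$; because $c_u$ is a positive integer and $\c$ is a constant, the gap $c_u - \c$ is bounded below by a positive constant $\delta=\delta(\c)$ for all sufficiently large $n$. Hence $c_1 \ge c_u \ge \c + \delta$.

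Now fix $v \notin C_1$ in a component $C_v$ of size $c_v \le c_1$ and any $w \in C_1$, and let $\Delta\CC_v$ denote the expected change in $v$'s benefit from adding $(v,w)$ to $v$'s strategy. I decompose by the location of the attack vertex $a$. When $a \notin C_v \cup C_1$, neither component is touched by the attack in $G$ or $G'$, so $v$ gains exactly $c_1$ additional vertices from the new edge. When $a \in C_v$, coupling the random retention of edges with the independent retention $Y$ of the new edge shows that either $v$ is already killed in $G$ (so $\Delta=0$) or $v$ survives in both $G$ and $G'$ and now reaches all of $C_1$ via the new edge (so $\Delta=c_1$); this case is non-negative. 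The only harmful case is $a \in C_1$: here $\CC_v^{\mathrm{old}}(a)=c_v$, but in $G'$ the attack can cross through the new edge and kill $v$ when $Y=1$ (probability $p$) \emph{and} $w$ lies in $a$'s $G[p]|_{C_1}$-component. Summing the worst-case loss $c_v$ over these sub-events, and using that the expected size of $w$'s component in $G[p]|_{C_1}$ is at most $c_1$, yields a total loss of at most $pc_vc_1$. Combining,
\[
\Delta\CC_v \;\ge\; \frac{c_1\bigl(n - c_v(1+p) - c_1\bigr)}{n}.
\]

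The conclusion follows by a case split on $c_1$. If $c_1 > 3\c/(1-p)$, then $c_v, c_1 < n/3$ give $n - c_v(1+p) - c_1 \ge n(1-p)/3$, so $\Delta\CC_v \ge c_1(1-p)/3 > \c$. Otherwise $c_1 \le 3\c/(1-p) = O(1)$, which forces $c_v \le c_1 = O(1)$ as well; the correction $c_1(c_v(1+p)+c_1)/n = O(1/n)$ is then beaten by the constant slack $c_1 - \c \ge \delta > 0$ for large $n$, giving $\Delta\CC_v > \c$. Either way, $v$ has a strictly profitable deviation, contradicting equilibrium. The main obstacle is the case $a \in C_1$, where the probabilistic cascade is the sole source of harm; the key is that its contribution is dampened by the factor $p$ of edge retention, making the bound survive even when $c_v$ is comparable to $c_1$.
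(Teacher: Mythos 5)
Your proof is correct, and it takes a genuinely different route from the paper's. The paper splits into two cases depending on whether $G$ has an isolated vertex; in the non-isolated case it assumes for contradiction that there are at least four components, takes $v$ in the \emph{smallest} component $C_0$ and connects it to the \emph{second-smallest} component $C_1$, and bounds the loss when the attack starts in $C_1$ crudely by $n_0$ (the full size of $C_0$), getting $\Delta \geq n_1(n - 2n_0 - n_1)/n$. The crude worst-case loss forces the paper to target a small component (so that the loss coefficient is small) and, separately, to handle isolated vertices (where the target's size cannot be lower-bounded by $\c$). Your argument instead connects an arbitrary $v$ outside the largest component $C_1$ directly to $C_1$, and the key new ingredient is the finer bound on the harm when the attack starts in $C_1$: you couple the retention of the new edge (probability $p$) with the component structure of $C_1[p]$, obtaining an expected loss of at most $p\, c_v c_1 / n$ rather than $c_v c_1 / n$. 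This extra factor of $p$ is exactly what makes the inequality survive when $c_1$ and $c_v$ are both comparable to $n/3$ — with the paper's crude bound, $n - 2c_v - c_1$ could be $O(1)$, but with yours $n - c_v(1+p) - c_1 > n(1-p)/3$ stays a constant fraction of $n$. This lets a single unified argument replace the paper's two cases, and the integrality-of-$c_1$ slack plus a case split on whether $c_1$ is large closes the argument the same way the paper's sub-cases (1a)/(1b) do. Both proofs deliver the same $n/3$ bound; yours is arguably cleaner because it leverages the probabilistic cascade rather than routing around it, while the paper's is more elementary in the loss estimate at the cost of more case bookkeeping.
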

\begin{proof}
Throughout we require $n > \max\{15\c+15\c^2, (9+3\c)/2, \c^2+4\c+5\}$.
Consider two cases: (1) when $G$ contains no isolated vertices\footnote{An
isolated vertex is a vertex with no incident edges.}
and (2) when $G$ contains at least one isolated vertex.

In the first case, assume by way of contradiction that $G$ has at least $4$ connected components.
Let $C_0$ be a smallest connected component in $G$, say of size $n_0$.
Also let $C_1$ be a second-smallest connected component in $G$, say of size $n_1$.
By construction, $n_0\le n_1$. 
Since $C_1$ has at least one edge and this edge has cost $\c$,
the size of $C_1$ is at least $\c +1$. 
Otherwise the vertex who bought this edge can improve her utility by dropping this edge.

Consider the deviation that a vertex $v$ in $C_0$ adds an edge to an arbitrary vertex in $C_1$. 
We show that the increase in the connectivity benefit of $v$ is more than $\c$. 
If the attack does not start at $C_0$ or $C_1$, which occurs with probability $(n-n_0-n_1)/n$, 
then the connectivity benefit of $v$ increases by $n_1$.
If the attack starts at $C_0$, then the only way for the  attack to reach $C_1$ is through 
the newly added edge by $v$. Hence, in this case the connectivity benefit of $v$ does not decrease. 
Therefore, the only scenario in which the connectivity benefit of $v$ can decrease is when 
the attack starts at $C_1$
(which occurs with probability $n_1/n$). In this case, the connectivity benefit of $v$ can decrease 
by at most $n_0$. So the change in the connectivity benefit of $v$ is at least  
$$\Delta \geq \frac{(n-n_0-n_1)}{n}n_1-\frac{n_1}{n}n_0 = \frac{n_1(n-2n_0-n_1)}{n},$$
after the deviation. We show that $\Delta > \c$ which is a contradiction to $G$ being an equilibrium network.

We consider two sub-cases based on the value of $n_1$: (1a) $n_1>5\c$ and (1b) $n_1\le 5\c$.
First consider case (1a) where $n_1>5\c$.
Since $G$ has at least $4$ connected components, then $n_0\le n/4$, 
$n_0+n_1\le n/2$ and $2n_0+n_1\le 3n/4$.
So $$n_1(n-2n_0-n_1)>5\c \frac{n}{4} > \c n \implies \Delta > \c.$$ 
Next consider case (1b) where $n_1\le 5\c$. Since $n-2n_0-n_1 \ge n-15\c$ 
and $n_1\geq \c+1$ then
\begin{align*}
n_1(n-2n_0-n_1) &\ge (\c+1)(n-15\c) \\&= \c n + n -15\c - 15\c^2 > \c n\\
&\implies \Delta > \c,
\end{align*}
when  $n>15\c+15\c^2$.

Therefore, the deviation of adding an edge by $v$ will increase the connectivity benefit
of $v$ by strictly more than $\c$ which is a contradiction. So $G$ contains at most $3$ 
connected components in case (1) and hence
the largest connected component of $G$ contains at least $n/3$ vertices.

In case (2), we show that the largest connected component of $G$ contains at 
least $n-\c-3$ vertices. Therefore, the largest connected component 
of $G$ contains at least $n/3$ vertices when $n > (9+3\c)/2$.

Let $v$ be an isolated vertex and let $C^{\star}$ be a largest connected component of $G$, say of size
$n^{\star}$. Consider the deviation where $v$ adds an edge to an arbitrary vertex in $C^{\star}$.
If the attack does not start neither in $C^{\star}$ nor at $v$ (which occurs with probability $(n-n^{\star}-1)/n)$,
then the connectivity benefit of $v$ increases by $n^{\star}$.
The only scenario in which the connectivity benefit of $v$ can decrease is when the attack starts at $C^{\star}$
(which occurs with probability $n^{\star}/n$).
In this case, the connectivity benefit of $v$ can decrease by at most $1$. So the change $\Delta$ in the 
connectivity benefit of $v$ after the deviation is at least 
$$\Delta \geq \frac{(n-n^{\star}-1)}{n}n^*-\frac{n^{\star}}{n} = \frac{n^*(n-n^*-2)}{n}.$$

Note that $\Delta \leq \c$, since $G$ is an equilibrium network.
This implies that $n^*(n-n^*-2) \le \c n$. To show that $n^* > n-\c-3$,
consider the function $f(x) = x(n-x-2)$. $f$ is increasing when $x\leq (n-2)/2$
and decreasing when $x\geq (n-2)/2$. Moreover, 
$$f(\c+1) = f(n-\c-3) = (\c+1)(n-\c+3) > \c n$$ 
when $n > \c^2+4\c+5$. So $f(n^*)$ is always larger than $\c n$
when $\c+1\leq n^*\leq n-3-\c$. Since we showed $f(n^*) = n^*(n-n^*-2) \le \c n$
in equilibrium it most be the case that either $n^* < \c+1$ or $n^* > n-3-\c$.
The former cannot happen because we assumed $G$ is non-empty, so $C^*$
must have at least one edge and therefore $n^* \geq \c+1$. Hence, $n^* > n-3-\c$
as claimed.

\end{proof}

We next present Lemma~\ref{lem:small-components} that describes
the relationship between the expected size of the largest connected component of $G[p]$ 
and the connectivity benefits of the vertices in $G$.
\begin{lemma}
\label{lem:small-components}
Let $G=(V,E)$ be an equilibrium network over $n$ vertices. Let 
$C$ be any connected component in $G$ of size $n_C$.
If the expected size of the largest component of $C[p]$ is at most $n_C(1-\epsilon)$, 
then the expected sum of the connectivity benefits of the vertices in $C$
is at least $(n-n_C)n_C^2/n+\epsilon n_C^3/(3n)$.
\end{lemma}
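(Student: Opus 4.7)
The plan is to split the expected sum of connectivity benefits of vertices in $C$ according to whether the random attack lands inside or outside $C$. Since $C$ is a full connected component of $G$, an attack at $v \notin C$ affects no vertex in $C$, so every vertex of $C$ retains a component of size $n_C$; this contributes exactly $(n-n_C)n_C^2/n$. It therefore suffices to show that the contribution from attacks at $v\in C$ — namely $\frac{1}{n}\sum_{v\in C}\E\bigl[\sum_k |Q_k(v)|^2\bigr]$, where $A_v$ is the component of $v$ in $C[p]$ and the $Q_k(v)$ are the components of $G[C\setminus A_v]$ — is at least $\epsilon n_C^3/(3n)$.

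A first observation is that, since $C[p]\subseteq G$, each component $P_j$ of $C[p]$ disjoint from $A_v$ is contained in a single component of $G[C\setminus A_v]$. Viewing $\sum_k|Q_k(v)|^2$ as a count of ordered pairs of survivors lying in the same post-attack component, every pair inside a common $P_j \ne A_v$ is counted, yielding $\sum_k|Q_k(v)|^2\ge \sum_{P_j\ne A_v}n_j^2$. Summing over $v\in C$ and using $\sum_k n_k^3\le L\sum_k n_k^2$ together with the hypothesis $\E[L]\le n_C(1-\epsilon)$ gives only the weaker bound $\epsilon n_C^2$ — short of the target by a factor of roughly $n_C/3$.

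To close this gap I would exploit the connectivity of $G[C]$ beyond $C[p]$. Fix a spanning tree $T_0$ of $G[C]$ adapted to the $C[p]$-partition so that each $P_j$ induces a subtree of $T_0$; contracting the $P_j$'s yields a tree $T^*$ on $m$ super-vertices joined by $m-1$ bridging edges of $T_0$. When $A_v = P_k$ is deleted, $T_0-P_k$ is a forest whose $\deg_{T^*}(P_k)$ sub-forests have total size $n_C-n_k$, and Cauchy--Schwarz bounds the sum of their squared sizes from below by $(n_C-n_k)^2/\deg_{T^*}(P_k)$. Since $G\supseteq T_0$ the components of $G[C\setminus A_v]$ are coarser than those of $T_0-A_v$, so the same lower bound applies to $\sum_k|Q_k(v)|^2$. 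Summing over the $n_k$ choices of $v\in P_k$ yields
\[
\sum_{v\in C}\sum_k |Q_k(v)|^2 \;\ge\; \sum_k \frac{n_k(n_C-n_k)^2}{\deg_{T^*}(P_k)}.
\]

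The main obstacle is to show that the right-hand side is at least $\epsilon n_C^3/3$ in expectation over $C[p]$. Using the constraint $\sum_k\deg_{T^*}(P_k)=2(m-1)$ together with Cauchy--Schwarz in Engel form, one obtains $\sum_k n_k(n_C-n_k)^2/\deg_{T^*}(P_k) \ge \bigl(\sum_k \sqrt{n_k}\,(n_C-n_k)\bigr)^2/(2(m-1))$. The cubic-in-$n_C$ scaling is then recovered by a case analysis on whether $C[p]$ is dominated by a few large components (small $m$) or is highly fragmented (large $m$), in each case combining the above inequality with the hypothesis $\E[L]\le n_C(1-\epsilon)$ to obtain the desired bound.
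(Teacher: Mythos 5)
Your opening step matches the paper: split the expected sum by whether the attack lands inside or outside $C$, harvest the exact contribution $(n-n_C)n_C^2/n$ from outside attacks, and reduce to showing the inside-attack contribution is at least $\epsilon n_C^3/(3n)$. You also correctly diagnose why the first-cut bound (summing $n_j^2$ over $C[p]$-components disjoint from $A_v$) falls short by a factor of roughly $n_C/3$. That diagnosis is right, and the spanning-tree refinement is a legitimate idea: your inequality
\[
\sum_{v\in C}\sum_{k} |Q_k(v)|^2 \;\ge\; \sum_{j} \frac{n_j\,(n_C-n_j)^2}{\deg_{T^*}(P_j)}
\]
is correct, and Engel form gives $\sum_{j} n_j(n_C-n_j)^2/\deg_{T^*}(P_j) \ge \bigl(\sum_j \sqrt{n_j}\,(n_C-n_j)\bigr)^2 / (2(m-1))$.

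However, this is where the proposal stops being a proof. What remains to be shown is a pointwise inequality of the form
\[
\frac{\bigl(\sum_j \sqrt{n_j}\,(n_C-n_j)\bigr)^2}{2(m-1)} \;\ge\; \frac{(n_C-L)\,n_C^2}{3},
\]
where $L=\max_j n_j$; only then does taking expectations and plugging in $\E[L]\le n_C(1-\epsilon)$ give the claimed bound. You defer this to ``a case analysis on whether $C[p]$ is dominated by a few large components (small $m$) or is highly fragmented (large $m$),'' but you neither state the cases nor carry them out, and the inequality is not obvious — it is close to tight (e.g., one component of size $n_C(1-\delta)$ plus $\delta n_C$ singletons gives roughly $\delta n_C^3/2$ on the left versus $\delta n_C^3/3$ on the right). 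A complete proof along your route would need to actually establish this. As written, there is a genuine gap at the crux of the argument.

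For comparison, the paper takes a much lighter route. It argues by contradiction: suppose the (conditional, per-realization) sum of benefits is below $(n_C-L)n_C/3$. Then some $C[p]$-component $C_0$ achieves this small value when attacked. Let $C_1,\ldots,C_k$ be the components of $G[C\setminus C_0]$; the key observation is that each $C_i$ is a union of $C[p]$-components, so an attack starting in $C_i$ can only kill vertices of $C_i$, and all of $C\setminus C_i$ not only survives but remains connected in $G$ via the hub $C_0$. Hence an attack in $C_i$ yields benefit at least $(n_C-n_i)^2$ per attacked vertex. Averaging over attack locations and expanding $\sum_i \frac{n_i}{n_C}(n_C-n_i)^2 \ge n_C\sum_i n_i - 2\sum_i n_i^2$, then combining $\sum_i n_i \ge n_C-L$ with the assumed bound $\sum_i n_i^2 < (n_C-L)n_C/3$, gives a value $\ge (n_C-L)n_C/3$, contradicting the assumption. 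This sidesteps the spanning-tree bookkeeping and the final optimization entirely. If you wish to salvage your version, you would either have to prove the displayed pointwise inequality (plausible, but it needs a real argument), or pivot to the paper's contradiction trick, whose essential insight — that $C\setminus C_i$ stays connected through $C_0$ — is absent from your proposal.
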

\begin{proof}
With probability $(n-n_C)/n$, the attack starts at a vertex outside of $C$. In this
case the sum of the connectivity benefits of vertices in $C$ is $n_C^2$ 
(the first term in the lower bound). 
Otherwise, with probability $n_C/n$, the attack starts at a vertex in $C$. 
We claim that in this case, the sum of the connectivity benefits of the vertices 
in $C$ is at least $\epsilon n_C^2/3$ (the second term in the lower bound).

To prove the claim it suffices to show that if the largest component in $C[p]$ has size $X$, 
then the  sum of connectivity benefits of the vertices in $C$ is at least $(n_C-X) n_C/3$. 
The claim would then follow by taking the expectation and using the assumption of the theorem 
that $X=n_C(1-\epsilon)$.

Assume by way of contradiction that the sum of connectivity benefits of vertices in 
$C$ is less than $(n_C-X)n_C/3$.
Then there exists a connected component $C_0$ in $C[p]$ such that if we delete the vertices in $C_0$,
then the sum of connectivity benefits of the vertices when the attack 
destroys $C_0$ is less than $(n_C-X) n_C/3$. 
Suppose $C_0$ has size $n_0$ and we know $n_0\le X$.
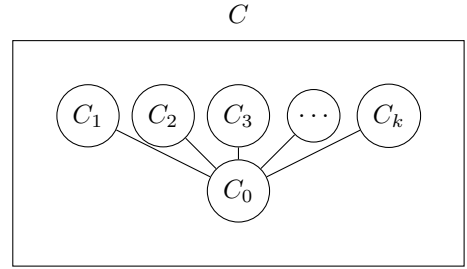
\begin{figure}[ht!]
\centering
\begin{tikzpicture}
[scale=0.50, every node/.style={circle,fill=white, draw=black}, gray node/.style = {circle, fill = blue, draw}]
\node [draw,rectangle,fill=white,minimum width=6cm,minimum height=3cm,label=$C$] at (0, 1) {};
\node (1) at  (0, 0){$C_0$};
\node (2) at  (-4, 2){$C_1$};\node (3) at  (-2, 2){$C_2$};\node (4) at  (0, 2){$C_3$};
\node (5) at  (2, 2){$\ldots$};\node (6) at  (4, 2){$C_k$};
\draw (1) to (2);\draw (1) to (3);\draw (1) to (4);\draw (1) to (5);\draw (1) to (6);
\end{tikzpicture}
\caption{\label{fig:proof}The connected component $C$ which contains components 
$C_0, \ldots, C_k$.}
\end{figure}

Let $C_1, C_2,\dots, C_k$ be the connected components in the subgraph 
of $G$ induced by $C \setminus C_0$ and 
let $n_1,n_2,\dots,n_k$ denote their sizes, respectively (see Figure~\ref{fig:proof}). 
Then by the assumption on the sum of connectivity benefits of the vertices after deleting $C_0$, 
when the attack destroys $C_0$ we have that
\begin{equation}
\label{eq:x2}
\sum_{i=1}^k n_i^2 < \frac{(n_C-X) n_C}{3}.
\end{equation}

If the attack starts at a vertex in a component $C_i$, then the vertices in $C \setminus C_i$ 
will still remain connected. 
This means the sum of connectivity benefits of the vertices in $C$ is at least 

\begin{align*}
\sum_{i=1}^k \frac{n_i}{n_C} (n_C-n_i)^2 &= \sum_{i=1}^k \frac{n_i}{n_C}(n_C^2-2n_Cn_i+n_i^2) \\
&\ge \sum_{i=1}^k n_Cn_i - 2\sum_{i=1}^k n_i^2\\
&=n_C\sum_{i=1}^k n_i - 2\sum_{i=1}^k n_i^2 .
\end{align*}

Since $\sum_{i=1}^k n_i=n_C-n_0 \ge n_C-X$ and $\sum_{i=1}^k n_i^2 < (n_C-X) n_C/3$ 
by Equation~(\ref{eq:x2}), the sum of the connectivity
benefits of the vertices in $C$ is at least $(n_C-X) n_C/3$; which is a contradiction.
\end{proof}

Theorem~\ref{lem:largest}~and~Lemma~\ref{lem:small-components} allow us 
to prove a lower bound on the social welfare of 
non-trivial equilibrium networks.

\begin{thm}
\label{thm:welfare}
Let $G=(V,E)$ be a non-trivial equilibrium network over $n$ vertices.
For any $\epsilon\in(0,1/8)$ and sufficiently large $n$, if $|E| < (1-2\epsilon)n\log(1/\epsilon)/p$, 
then the social welfare of $G$ is at least $\epsilon n^2/3 - O(n/p)$.
\end{thm}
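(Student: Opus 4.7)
The plan is to combine Theorem~\ref{lem:largest}, which guarantees a large connected component in any non-trivial equilibrium network, with Lemma~\ref{lem:small-components}, which translates a fragmentation gap in the random subgraph into a lower bound on connectivity benefits, and then subtract the total edge cost.

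First I would invoke Theorem~\ref{lem:largest} to extract a connected component $C^*$ of $G$ of size $n_C \geq n/3$. Under the sparsity assumption, $|E_{C^*}| \leq |E| < (1-2\epsilon) n \log(1/\epsilon)/p$, so the average degree of $C^*$ is at most $6(1-2\epsilon)\log(1/\epsilon)/p$.

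The main technical step, and I expect the principal obstacle, is to verify the hypothesis of Lemma~\ref{lem:small-components} for $C^*$: namely, that $\E[|\text{largest component of }C^*[p]|] \leq n_C(1-\epsilon)$. I would approach this by bounding from below the expected number of vertices of $C^*$ lying outside the largest component of $C^*[p]$. A natural starting point is to count the expected number of isolated vertices via convexity of $d \mapsto (1-p)^d$ and Jensen's inequality on the degree sequence of $C^*$, which gives at least $n_C (1-p)^{\bar d}$ isolated vertices in expectation. A finer argument---counting vertices in small (non-singleton) components, and invoking the equilibrium property (no vertex profits from dropping an edge, akin to the almost-certain-infection argument of Lemma~\ref{lem:certain}) to rule out configurations that would otherwise push the giant of $C^*[p]$ above $n_C(1-\epsilon)$---is likely needed to sharpen this to the required $\epsilon n_C$ deficit across all $\epsilon \in (0, 1/8)$, since purely Erd\H{o}s--R\'enyi-style estimates suggest a giant fraction only marginally below $1-\epsilon$ for these density parameters.

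Once the hypothesis is verified, Lemma~\ref{lem:small-components} immediately yields that the expected sum of connectivity benefits of vertices in $C^*$ is at least $f(n_C) := (n-n_C) n_C^2/n + \epsilon n_C^3/(3n)$. A short calculation shows that $f$ is unimodal on $[n/3, n]$ with a maximum at $n_C = 2n/(3-\epsilon)$ in the interior, so its minimum on this interval is attained at an endpoint: $f(n/3) = (6+\epsilon)n^2/81$ and $f(n) = 27\epsilon n^2/81 = \epsilon n^2/3$, and for $\epsilon < 1/8 < 6/26$ we have $6+\epsilon > 27\epsilon$, so $f(n_C) \geq \epsilon n^2/3$ throughout. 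Subtracting the total edge cost $\c|E| < \c(1-2\epsilon) n\log(1/\epsilon)/p = O(n/p)$ (treating $\epsilon$ as constant in the big-$O$) from the expected connectivity benefits yields the claimed bound $\epsilon n^2/3 - O(n/p)$.
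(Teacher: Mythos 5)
You correctly identify both ingredients (Theorem~\ref{lem:largest} and Lemma~\ref{lem:small-components}), and your calculus check on $f(n_C) = (n - n_C)n_C^2/n + \epsilon n_C^3/(3n)$ over $[n/3, n]$ is right. But you also rightly sense, and then fail to close, the central gap: the hypothesis of Lemma~\ref{lem:small-components} (that the expected size of the largest component of $C^*[p]$ is at most $n_C(1-\epsilon)$) cannot be verified uniformly over $n_C \in [n/3,n]$ from the sparsity assumption alone. The bound $|E| < (1-2\epsilon)n\log(1/\epsilon)/p$ is stated relative to the \emph{total} number of vertices $n$; if all edges concentrate in a component of size $n_C \approx n/3$, the average degree inside $C^*$ is inflated by a factor of roughly $3$, and the Jensen/isolated-vertex count yields on the order of $\epsilon^{O(1)} n_C$ isolated vertices rather than the required $\epsilon n_C$, so the giant of $C^*[p]$ could have expected size well above $n_C(1-\epsilon)$. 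Your proposed fix---``sharpen'' the fragmentation argument via the equilibrium property---is left entirely unspecified and is not how the proof goes.

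The paper instead performs a case split on $n^*$ that sidesteps the problem. When $n^* \le (1-2\epsilon)n$, it does \emph{not} invoke Lemma~\ref{lem:small-components} at all: the bare observation that the attack starts outside $C^*$ with probability $(n-n^*)/n$, in which case all of $C^*$ survives, already yields expected benefit at least $(n-n^*)n^{*2}/n$, and on $[n/3,(1-2\epsilon)n]$ this is bounded below by $\min\{2n^2/27,\ 2\epsilon(1-2\epsilon)^2 n^2\} \ge \epsilon n^2/3$ whenever $\epsilon < 1/8$. Only when $n^* > (1-2\epsilon)n$ is the fragmentation lemma used, and precisely in that regime the sparsity hypothesis normalizes correctly: $|E| < (1-2\epsilon)n\log(1/\epsilon)/p < n^*\log(1/\epsilon)/p$, so the average degree in $C^*$ is at most $2\log(1/\epsilon)/p$ and the isolated-vertex argument gives the needed deficit. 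The case split is exactly what lets the edge bound be applied where it has bite and avoided where it does not; without it, you are attempting to prove a fragmentation statement that may simply be false for small $n_C$.
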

\begin{proof}
We show the expected sum of the connectivity benefits of the vertices 
in $G$ is at least $\epsilon n^2/3$.
Subtracting off the cumulative expenditure for edge purchases which is $|E|\c = O(n/p)$
then imply the statement of the theorem. 

Suppose the largest connected component of $G$ say $C^*$ 
has size $n^{\star}$. By Theorem~\ref{lem:largest},
$n^{\star}\ge n/3$. We consider two cases based
on the size of $n^*$: (1) $n^{\star} \le (1-2\epsilon)n$ and (2) $n^* > (1-2\epsilon)n$.

In case (1), where $n^{\star} \leq (1-2\epsilon)n$, the sum of the connectivity 
benefits of the vertices in the largest connected component is at least $(n-n^{\star})n^{\star 2}/n$.
This is because with probability of $(n-n^{\star})/n$ the attack starts outside of the component 
and all the vertices in the component survive; in such case the sum of connectivity benefits
of the vertices in $C^*$ is $n^{\star 2}$. Moreover, the derivative of $(n-n^{\star})n^{\star 2}/n$
with respect to $n^{\star}$ is $(-3{n^{\star}}^2+2n n^{\star})/n$, which is positive when $0<n^{\star}<2n/3$
and negative when $n^{\star}>2n/3$. Since $n^{\star} \in [n/3,(1-2\epsilon)n]$, the minimum value of 
$(n-n^{\star})n^{\star 2}/n$ should be at one of the end points which correspond to values 
$2n^2/27$ or $2\epsilon(1-2\epsilon)^2n^2$, respectively. Both of these values
are larger than $\epsilon n^3/3$ when $\epsilon<1/8$, which means the sum of connectivity benefits 
is at least $\epsilon n^2/3$ in this case.

In case (2), where $n^* > (1-2\epsilon)n$, the number of edges in the connected component $C^*$ is 
at most $n^{\star}\log(1/\epsilon)/p$ (which occurs when all the edges are in this component).
Let us denote the vertices in $C^*$ by numbers from $1$ to $n^{\star}$. Let $d_i$ denote the 
degree of vertex $i$. We first bound the expected number of isolated 
vertices of $C^*$ in $G[p]$.
A vertex becomes isolated in $G[p]$ if none of the edges 
adjacent to it are sampled to  be retained. This event occurs with probability $(1-p)^{d_i}$
for vertex $i$. So we can derive a lower bound on the expected number of isolated vertices of $C^*$ in $G[p]$
as follows.
\begin{align*}
\sum_{i=1}^{n^{\star}} (1-p)^{d_i}\geq n^* (1-p)^{\frac{2|E|}{n^{\star}}}  > \epsilon n^{\star},
\end{align*}
where the first inequality is by inequality of arithmetic and geometric means and the second inequality is
by the assumption that $|E| < n^{\star}\log(1/\epsilon)/p$. Thus
the expected size of the largest connected component in $C^*[p]$ is at most $n^*(1-\epsilon)$.
We can now apply Lemma~\ref{lem:small-components}
to show that the expected sum of the connectivity benefits of the vertices in $C^*$ is at least 
$$\frac{\epsilon n^{\star 3}}{3n}+ \frac{(n-n^{\star})n^{\star 2}}{n} = \frac{n^{\star 2}n-(1-\epsilon/3)n^{\star 3}}{n}.$$
This is strictly decreasing in $n^*$ as the derivative with respect to $n^*$ is negative. So
the expected sum of connectivity benefits of the vertices in $C^*$ (and hence in $G$) is 
at least $\epsilon n^2/3$ (when $n^{\star}=n$).
\end{proof}

Finally, we remark that unlike the models of~\citet{BalaG00}~and~\citet{GoyalJKKM16},  
achieving a social welfare of $n^2-o(n^2)$ is impossible in our game even when restrciting
to sparse and non-trivial 
equilibrium networks. This is formalized in Proposition~\ref{lem:upper-welfare}. 
\begin{pro}
\label{lem:upper-welfare}
There exists a non-trivial equilibrium network $G=(V,E)$ over $n$ vertices with $O(n)$ edges 
such that the social welfare of $G$ is $kn^2$ for $k <1 $. 
\end{pro}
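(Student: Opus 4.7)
The plan is to exhibit an explicit non-trivial equilibrium network on $n$ vertices with $O(n)$ edges whose social welfare is $(1-p)n^2 + O(n)$, which, since $p$ is a positive constant, suffices to take $k < 1$. The natural candidate is the single hub-and-spoke from Figure~\ref{fig:eq-examples}: a designated center $c$ joined to each of $n-1$ leaves, with every leaf purchasing her edge to $c$; this uses $n-1 = O(n)$ edges. The two nontrivial deviations to rule out are those of a leaf $u$: (a) dropping her edge to $c$, or (b) adding an edge to some other leaf. Dropping isolates $u$, giving her expected connectivity benefit $(n-1)/n$, while in the star her expected benefit is $\Theta((1-p)n)$, so dropping is strictly worse for sufficiently large $n$. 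A new edge $(u,v)$ only increases $u$'s post-attack component in the event that $c$ has died but both $u$ and $v$ survive (otherwise $u$ and $v$ already lie in the same component via $c$), yielding an $O(1)$ marginal gain in expectation; meanwhile it creates an alternate infection path $c \to v \to u$, strictly increasing $u$'s probability of dying, so for $\c$ in the regime where the hub-and-spoke is known to form (cf.\ Section~\ref{sec:examples}) the expected net change in utility is negative.

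For the welfare computation I condition on the identity of the attack seed. If $c$ is attacked (probability $1/n$), each leaf is independently infected with probability $p$, and surviving leaves are isolated singletons, contributing an expected sum of component sizes equal to $(n-1)(1-p)$. If a leaf $v$ is attacked (probability $(n-1)/n$), then with probability $1-p$ the edge $(v,c)$ is not retained in $G[p]$, only $v$ dies, the remaining star on $n-1$ vertices is intact, and the sum of component sizes is $(n-1)^2$; with probability $p$ the center also dies and then each of the remaining $n-2$ leaves is independently infected with probability $p$, contributing $(n-2)(1-p)$ in expectation. Summing,
\[
\frac{(n-1)(1-p)}{n}\bigl[1 + (n-1)^2 + p(n-2)\bigr] = (1-p)n^2 + O(n),
\]
and subtracting the total edge expenditure $(n-1)\c = O(n)$ gives social welfare $(1-p)n^2 + O(n) = k n^2(1+o(1))$ with $k = 1-p < 1$.

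The main obstacle is the equilibrium verification, specifically ruling out beneficial leaf--leaf edge purchases: this requires comparing a small connectivity gain against a small increase in infection risk, which is precisely the \#P-hard kind of quantity that is the central computational difficulty of our model. However, the high symmetry of the star and the fact that the ``bonus'' connectivity from a leaf--leaf edge materializes only when the center has died reduce the check to a short case analysis over a handful of attack outcomes, so a nontrivial parameter range $(\c,p)$ can be exhibited explicitly.
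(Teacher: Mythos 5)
Your proposal is correct and takes essentially the same approach as the paper: the paper's one-line proof simply exhibits the hub-and-spoke network from Figure~\ref{fig:eq-examples} with $p=0.6$, asserting $k=0.4$, which is exactly your $k=1-p$. Your explicit conditioning on the attack seed and the resulting formula $\frac{(n-1)(1-p)}{n}\bigl[1+(n-1)^2+p(n-2)\bigr]=(1-p)n^2+O(n)$ fills in the arithmetic the paper leaves implicit, and your discussion of the equilibrium verification (dropping the spoke edge, or a leaf buying a leaf--leaf edge) is at the same level of detail as the paper's Section~\ref{sec:examples}, which likewise defers the precise $(\c,p)$ parameter regime.
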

\begin{proof}
The hub-spoke equilibrium (see Figure~\ref{fig:eq-examples})
with $p = 0.6$ satisfies the condition of Proposition~\ref{lem:upper-welfare}
with $k=0.4$.
\end{proof}
\section{Conclusions}
\label{sec:future}
We studied a natural network formation game where each network connection 
has the potential to both bring additional utility to an agent as well add to her 
risk of being infected by a cascading infection attack. We showed that the 
equilibria resulting from these competing concerns are essentially sparse
and containing at most $O(n \log n)$ edges. We also  showed that any 
non-trivial equilibrium network in our game achieves the highest possible 
social welfare of $\Theta(n^2)$ whenever the equilibrium network has only 
$O(n)$ edges.

The Price of Anarchy in our model is $\Theta(n)$. 
To illustrate, consider the $\c \geq 1$ regime. A central planner can built 
a cycle or two disconnected hub-spoke structures of size $n/2$ depending on 
whether the probability of spread of the attack $p$ is low or high, respectively 
(and both of these structures can also form in equilibrium). Such networks 
have social welfare of $\Theta(n^2)$. However, the empty network is an 
equilibrium network in this regime implying a Price of Anarchy of at least 
$\Theta(n)$ -- the worst Price of Anarchy possible. The Price of Stability 
in our model is $\Theta(1)$ since the social welfare is trivially bounded by $n^2$ 
and either of the two equilibrium networks above achieve a social welfare of $\Theta(n^2)$. 

Our results suggest several natural questions for future work. Our upper 
bound of $O(n \log n)$ is a logarithmic factor higher than the densest equilibrium 
network that we can create. Narrowing this gap is the most interesting open 
question. Improving our network density upper bound to $O(n)$ edges 
would immediately imply that all non-trivial equilibrium networks achieve 
$\Omega(n^2)$ social welfare. Another direction for future work is to analyze 
how network density and social welfare evolves when agents additionally 
have an option to invest in immunization that protects them from infections. 
\bibliographystyle{named}
\bibliography{bib}

\begin{thebibliography}{}

\bibitem[\protect\citeauthoryear{Alon}{1995}]{noga}
Noga Alon.
\newblock A note on network reliability.
\newblock {\em Discrete Probability and Algorithms}, pages 11--14, 1995.

\bibitem[\protect\citeauthoryear{Bala and Goyal}{2000}]{BalaG00}
Venkatesh Bala and Sanjeev Goyal.
\newblock A noncooperative model of network formation.
\newblock {\em Econometrica}, 68(5):1181--1230, 2000.

\bibitem[\protect\citeauthoryear{Blume \bgroup \em et al.\egroup
  }{2011}]{BlumeEKKT11}
Larry Blume, David Easley, Jon Kleinberg, Robert Kleinberg, and {\'{E}}va
  Tardos.
\newblock Network formation in the presence of contagious risk.
\newblock In {\em Proceedings of the 12th {ACM} Conference on Electronic
  Commerce}, pages 1--10, 2011.

\bibitem[\protect\citeauthoryear{Draief and Massoulie}{2009}]{DraiefM09}
Moez Draief and Laurent Massoulie.
\newblock {\em Epidemics and Rumours in Complex Networks}.
\newblock London Mathematical Society Lecture Note Series. Cambridge University
  Press, 2009.

\bibitem[\protect\citeauthoryear{Erd{\"{o}}s and
  R{\'e}nyi}{1960}]{erdos1960evolution}
Paul Erd{\"{o}}s and Alfr{\'e}d R{\'e}nyi.
\newblock On the evolution of random graphs.
\newblock {\em Publication of the Mathematical Institute of the Hungarian
  Academy of Sciences}, 5(1):17--60, 1960.

\bibitem[\protect\citeauthoryear{Fabrikant \bgroup \em et al.\egroup
  }{2003}]{FabrikantLMPS03}
Alex Fabrikant, Ankur Luthra, Elitza Maneva, Christos Papadimitriou, and Scott
  Shenker.
\newblock On a network creation game.
\newblock In {\em Proceedings of the 22nd {ACM} Symposium on Principles of
  Distributed Computing}, pages 347--351, 2003.

\bibitem[\protect\citeauthoryear{Friedrich \bgroup \em et al.\egroup
  }{2017}]{FriedrichIKLNS17}
Tobias Friedrich, Sven Ihde, Christoph Ke{\ss}ler, Pascal Lenzner, Stefan
  Neubert, and David Schumann.
\newblock Efficient best response computation for strategic network formation
  under attack.
\newblock In {\em Proceedings of the 10th International Symposium on
  Algorithmic Game Theory}, pages 199--211, 2017.

\bibitem[\protect\citeauthoryear{Goyal \bgroup \em et al.\egroup
  }{2016}]{GoyalJKKM16}
Sanjeev Goyal, Shahin Jabbari, Michael Kearns, Sanjeev Khanna, and Jamie
  Morgenstern.
\newblock Strategic network formation with attack and immunization.
\newblock In {\em Proceedings of 12th International Conference on Web and
  Internet Economics}, pages 429--443, 2016.

\bibitem[\protect\citeauthoryear{Kempe \bgroup \em et al.\egroup
  }{2003}]{KempeKT03}
David Kempe, Jon Kleinberg, and {\'{E}}va Tardos.
\newblock Maximizing the spread of influence through a social network.
\newblock In {\em Proceedings of the 9th {ACM} {SIGKDD} International
  Conference on Knowledge Discovery and Data Mining}, pages 137--146, 2003.

\bibitem[\protect\citeauthoryear{Kliemann \bgroup \em et al.\egroup
  }{2017}]{KliemannSS17}
Lasse Kliemann, Elmira~Shirazi Sheykhdarabadi, and Anand Srivastav.
\newblock Swap equilibria under link and vertex destruction.
\newblock {\em Games}, 8(1):14, 2017.

\bibitem[\protect\citeauthoryear{Kliemann}{2011}]{Kliemann11}
Lasse Kliemann.
\newblock The price of anarchy for network formation in an adversary model.
\newblock {\em Games}, 2(3):302--332, 2011.

\bibitem[\protect\citeauthoryear{Wang \bgroup \em et al.\egroup
  }{2012}]{WangCW12}
Chi Wang, Wei Chen, and Yajun Wang.
\newblock Scalable influence maximization for independent cascade model in
  large-scale social networks.
\newblock {\em Data Mining and Knowledge Discovery}, 25(3):545--576, 2012.

\end{thebibliography}
\appendix
\section{Galton-Watson Branching Process}
\label{sec:useful-lem}

The Galton-Watson branching process was introduced by Galton as a mathematical model
for the propagation of family names. In the process, a population of individuals 
(e.g. people) evolve over discrete time $n=1, 2, \ldots$. Each $n$th generation individual
(i.e. individuals who are produced at time $n$) produce a random number of individuals 
(called \emph{offsprings}) independently according to some
distribution $\xi$ (called the \emph{offspring distribution}) for the $n+1$th generation. 
The goal is to study the number of individuals in the future generations. 

The process can go to extinction when after $n$ generation, with high probability, 
the number of individuals in generation $n+1$ is 0. This can happen for example when
we start from 1 individual and $\E[\xi]<1$. In case the process goes
to extinction, we are interested
in characterizing how fast this happens or how 
many individuals are generated before extinction. 

\begin{lemma}[Galton-Watson process~\cite{DraiefM09}]
\label{lem:gw}
Let $\xi$ denote the offspring distribution of each individual in the Galton-Watson process
when starting with one individual. Furthermore,  let $h=\text{sup}_{\theta\ge 0}\{\theta-\log \E[e^{\theta\xi}]\}$. 
Let $\mathcal{T}$ denote the set of of total individuals created by process 
when the process goes to extinction. Then
$\Pr[\mathcal{|T|}>k] \le e^{-kh}$ for all $k\in \mathbb{N}$.
\end{lemma}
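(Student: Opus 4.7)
The plan is to reduce the event $\{|\mathcal{T}|>k\}$ to a large-deviation event for a sum of i.i.d.\ copies of $\xi$, and then apply a standard Chernoff argument, optimizing over $\theta\ge 0$ to recover exactly the rate $h$ appearing in the statement.

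First I would set up the well-known random-walk encoding of the total progeny. Let $\xi_1,\xi_2,\dots$ be i.i.d.\ copies of $\xi$, interpreted as the number of offspring produced, in some fixed exploration order (say depth-first or breadth-first), by the $i$th individual ever processed. Define the walk $S_0=1$ and $S_i = S_{i-1}+\xi_i-1$ for $i\ge 1$; here $S_i$ equals $1$ plus the cumulative number of offspring generated minus the number of individuals already processed, i.e.\ the number of individuals currently known but not yet processed. The process goes extinct exactly when the walk hits $0$, and the hitting time $\tau=\inf\{i\ge 1:S_i=0\}$ is precisely the total number of individuals ever produced, so $|\mathcal{T}|=\tau$ on the event of extinction. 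In particular, on the extinction event, $|\mathcal{T}|>k$ forces $S_k\ge 1$, i.e.\ $\sum_{i=1}^k \xi_i \ge k$. Therefore
\[
\Pr[|\mathcal{T}|>k] \;\le\; \Pr\!\left[\sum_{i=1}^k \xi_i \ge k\right].
\]

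Next I would apply the exponential Markov (Chernoff) inequality to the i.i.d.\ sum. For any $\theta\ge 0$,
\[
\Pr\!\left[\sum_{i=1}^k \xi_i \ge k\right]
\;\le\; e^{-\theta k}\,\E\!\left[e^{\theta\sum_{i=1}^k \xi_i}\right]
\;=\; e^{-\theta k}\bigl(\E[e^{\theta \xi}]\bigr)^{k}
\;=\; \exp\!\bigl(-k\bigl(\theta-\log\E[e^{\theta\xi}]\bigr)\bigr).
\]
Taking the infimum of the right-hand side over $\theta\ge 0$ (equivalently, the supremum of the exponent) yields $\Pr[|\mathcal{T}|>k]\le e^{-kh}$ by the definition of $h$, which is exactly the claimed bound.

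The main obstacle, and the only point where one must be slightly careful, is justifying the i.i.d.\ random-walk encoding and making sure that the bound on $\Pr[|\mathcal{T}|>k]$ does not require conditioning on extinction in a way that breaks the exponential moment argument. The clean way to handle this is to define $\xi_i$ as the offspring count of the $i$th would-be individual in a fixed exploration order, drawing fresh i.i.d.\ copies even after the walk has hit $0$ (these phantom draws simply do not contribute to $\mathcal{T}$). Then $\{|\mathcal{T}|>k\}\subseteq\{S_k\ge 1\}=\{\sum_{i=1}^k\xi_i\ge k\}$ holds as an unconditional inclusion of events, so the Chernoff step applies to a genuine i.i.d.\ sum and no conditioning on extinction is needed. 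Note also that the bound is vacuous (equal to $1$) unless $h>0$, which occurs precisely when $\E[\xi]<1$; in the subcritical regime the infimum is attained at some $\theta^\star>0$, recovering the standard Cramér rate function for $\xi$ evaluated at the point $1$.
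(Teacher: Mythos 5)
The paper does not actually prove Lemma~\ref{lem:gw}; it is stated with a citation to the reference and used as a black box (only Corollary~\ref{cor:gw} is proved). Your argument supplies the standard proof that appears in that reference and elsewhere: encode the total progeny as the first hitting time of $0$ by the Lukasiewicz-type walk $S_i = 1 + \sum_{j\le i}(\xi_j - 1)$, observe that $\{|\mathcal{T}|>k\}\subseteq\{S_k\ge 1\}=\{\sum_{i\le k}\xi_i\ge k\}$ because the walk moves down by at most $1$ per step, and then apply Chernoff and optimize over $\theta\ge 0$, which reproduces exactly the rate $h$ in the statement. Your handling of the two technical points — drawing phantom i.i.d.\ copies past extinction so the inclusion of events is unconditional, and noting that the bound is only nontrivial in the subcritical regime $\E[\xi]<1$ where $h>0$ — is correct and careful. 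So the proposal is a correct proof of the cited lemma, and it is the expected one; there is no competing argument in the paper to compare against.
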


\begin{cor} \label{cor:gw}
Let $\epsilon > 0$ and $n\in\N$. 
In the Galton-Watson process, suppose the offspring distribution $\xi$ 
is the sum of $m=O(n^{1-\epsilon})$ Bernoulli random variables with probability at most $1/n$.
Then with probability at least $1-o(n^{-2})$, the number of individuals created by the process 
is at most $3/\epsilon$.
\end{cor}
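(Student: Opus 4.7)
The plan is to apply Lemma~\ref{lem:gw} directly after obtaining a clean lower bound on the Cramér exponent $h = \sup_{\theta \ge 0}\{\theta - \log \E[e^{\theta\xi}]\}$. Since $\xi$ is a sum of $m$ independent Bernoulli variables with parameters $p_i \le 1/n$, the moment-generating function factorizes, and the standard estimate $1+x\le e^x$ gives
\[
\E[e^{\theta\xi}] \;=\; \prod_{i=1}^{m}\bigl(1+p_i(e^{\theta}-1)\bigr) \;\le\; \exp\!\left(\tfrac{m}{n}(e^{\theta}-1)\right).
\]
Taking logarithms yields $\log\E[e^{\theta\xi}] \le (m/n)(e^{\theta}-1)$, so
\[
h \;\ge\; \sup_{\theta\ge 0}\left\{\theta - \tfrac{m}{n}(e^{\theta}-1)\right\}.
\]

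Next I would choose $\theta = \epsilon \log n$. Since $m = O(n^{1-\epsilon})$, there is a constant $C$ such that $m/n \le C n^{-\epsilon}$, and therefore $(m/n)(e^{\theta}-1) \le C n^{-\epsilon}\cdot n^{\epsilon} = C$. Plugging this into the lower bound on $h$ gives $h \ge \epsilon \log n - C$ for all sufficiently large $n$.

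Finally, I would apply Lemma~\ref{lem:gw} with $k = \lceil 3/\epsilon\rceil$. Because $k\ge 3/\epsilon$,
\[
\Pr\bigl[\,|\mathcal{T}|>k\,\bigr] \;\le\; e^{-kh} \;\le\; e^{-(3/\epsilon)(\epsilon\log n - C)} \;=\; n^{-3}\cdot e^{3C/\epsilon} \;=\; O(n^{-3}) \;=\; o(n^{-2}),
\]
which is exactly the conclusion of the corollary (after observing that $|\mathcal{T}|\le \lceil 3/\epsilon\rceil$ implies $|\mathcal{T}|$ is bounded by a constant depending only on $\epsilon$, matching the stated $3/\epsilon$ up to integer rounding).

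The only non-mechanical step is the choice $\theta=\epsilon\log n$, which I expect to be the main point to get right: smaller $\theta$ would give a too-weak exponent (not enough to dominate $n^{-2}$), while larger $\theta$ would make the term $(m/n)e^{\theta}$ blow up past constant size and ruin the linear-in-$\log n$ growth of $h$. The scaling $m=O(n^{1-\epsilon})$ is precisely what makes $\theta=\epsilon\log n$ the sweet spot. Once $\theta$ is fixed, everything else reduces to routine MGF manipulation and a direct substitution into Lemma~\ref{lem:gw}.
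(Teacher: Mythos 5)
Your proposal is correct and follows the same high-level structure as the paper's proof: lower-bound the Cramér exponent $h$, pick $\theta\approx\epsilon\log n$ as the critical tilt, then invoke Lemma~\ref{lem:gw} to get $\Pr[|\mathcal{T}|>3/\epsilon]=O(n^{-3})$. The one place you diverge is in how the MGF is bounded. The paper estimates $\E[e^{\theta\xi}]$ directly from the tail of the Bernoulli sum, writing $\Pr[\xi=i]\le\binom{m}{i}n^{-i}$ and summing the resulting geometric series $\sum_i(me^\theta/n)^i$, which for $\theta'=\epsilon\log n-1$ stays below a small constant. You instead exploit independence to factor $\E[e^{\theta\xi}]=\prod_i(1+p_i(e^\theta-1))$ and apply $1+x\le e^x$, getting $\log\E[e^{\theta\xi}]\le (m/n)(e^\theta-1)$ in one step. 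Your route is the standard Chernoff/Poisson-approximation bound, a bit cleaner and requiring no series manipulation; the paper's route is slightly more hands-on but leads to the same quantitative conclusion $h\ge\epsilon\log n - O(1)$. Both approaches correctly identify that $m=O(n^{1-\epsilon})$ makes $\theta\approx\epsilon\log n$ the sweet spot. The only cosmetic point is the integrality of $k$ in Lemma~\ref{lem:gw} (which requires $k\in\N$ while $3/\epsilon$ need not be an integer); you acknowledge this by taking $k=\lceil 3/\epsilon\rceil$, and the paper elides it — either way it is immaterial since the exponent comfortably exceeds $2\log n$.
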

\begin{proof}
\begin{align*}
\E[e^{\theta\xi}] & = \sum_{i=1}^m \Pr[\xi=i] e^{\theta i}
\le \sum_{i=1}^m \frac{1}{n^i} \tbinom{m}{i} e^{\theta i} 
\le \sum_{i=1}^m \frac{m^i e^{\theta i}}{n^ii!}\\
&\le \sum_{i=1}^m (\frac{me^{\theta}}{n})^i 
\end{align*}
Let $\theta'=\epsilon \log n - 1$, 
$$\E[e^{\theta' \xi}] \le \sum_{i=1}^m (\frac{1}{e})^i \le \frac{e}{e-1} $$
So \begin{align*}h&=\text{sup}_{\theta\ge 0}\{\theta-\log \E[e^{\theta\xi}]\}\geq \theta'-\E[e^{\theta' \xi}] \\&\geq \epsilon\log n -1 - \frac{1}{e-1} > \epsilon \log n -3.\end{align*}
Therefore by Lemma~\ref{lem:gw}, $$\Pr[|\mathcal{T}|>\frac{3}{\epsilon}]\le \frac{e^{9/\epsilon}}{n^3} = o(n^{-2}),$$
as claimed.
\end{proof}

\end{document}